\documentclass{article}

\parindent=20pt
\textwidth 156 mm
\textheight 220 mm
\topmargin -10pt
\evensidemargin 10pt
\oddsidemargin 10pt
\frenchspacing

\usepackage{ccfonts}
\usepackage{amssymb,amsmath,amsthm}
\usepackage{mathrsfs}
\usepackage{epsfig}
\usepackage[usenames,dvipsnames]{color}

%\bibliographystyle{plain}

% macros pour des locutions:

\newcommand\ie{{\em i.e.}}

% macros pour des lettres:

\def\B{\mathscr B}
\def\C{\mathbb C}
\def\d{\mathrm{d}}
\def\D{\mathbb D}

\def\G{\mathcal G}
\def\H{\mathcal H}
\def\K{\mathscr K}
\def\M{\mathcal M}
\def\N{\mathbb N}
\def\O{\mathcal O}

\def\R{\mathbb R}
\def\S{\mathscr S}
\def\T{\mathbb T}
\def\V{\mathcal V}
\def\X{\mathfrak X}
\def\Z{\mathbb Z}

% macros pour des parentheses et symboles:

\def\dom{\mathcal D}
\def\Ran{\mathop{\mathsf{Ran}}\nolimits}
\def\e{\mathop{\mathrm{e}}\nolimits}
\def\id{\mathop{\mathrm{id}}\nolimits}
\def\det{\mathop{\mathrm{det}}\nolimits}
\def\im{\mathop{\mathsf{Im}}\nolimits}
\def\re{\mathop{\mathsf{Re}}\nolimits}
\def\linf{\mathsf{L}^{\:\!\!\infty}}
\def\ltwo{\mathsf{L}^{\:\!\!2}}

\def\cl{\mathop{\mathsf{cl}}\nolimits}
\def\slim{\mathop{\hbox{\rm s-}\lim}\nolimits}

% macros pour des entetes:

\newtheorem{Theorem}{Theorem}[section]
\newtheorem{Remark}[Theorem]{Remark}
\newtheorem{Lemma}[Theorem]{Lemma}

\newtheorem{Corollary}[Theorem]{Corollary}
\newtheorem{Proposition}[Theorem]{Proposition}

\newtheorem{Example}[Theorem]{Example}

% pour la numerotation des equations:

\begin{document}

%--------------------------------------------------------------------------------------
% Title
%--------------------------------------------------------------------------------------

\title{Commutator methods for unitary operators}

\author{
C. Fern\'andez$^1$\footnote{Supported by the Fondecyt Grant 1100304 and by the
ECOS/CONICYT Grant C10E01.},
S. Richard$^2$\footnote{On leave from Universit\'e de Lyon; Universit\'e Lyon 1;
CNRS, UMR5208, Institut Camille Jordan, 43 blvd du 11 novembre 1918, F-69622
Villeurbanne-Cedex, France. Supported by the Japan Society for the Promotion of
Science (JSPS) and by ``Grants-in-Aid for scientific Research".}
and R. Tiedra de Aldecoa$^1$\footnote{Supported by the Fondecyt Grant 1090008, by the
Iniciativa Cientifica Milenio ICM P07-027-F ``Mathematical Theory of Quantum and
Classical Magnetic Systems" and by the ECOS/CONICYT Grant C10E01.}
}
\date{\small}
\maketitle \vspace{-1cm}

\begin{quote}
\emph{
\begin{itemize}
\item[$^1$] Facultad de Matem\'aticas, Pontificia Universidad Cat\'olica de Chile,\\
Av. Vicu\~na Mackenna 4860, Santiago, Chile
\item[$^2$] Graduate School of Pure and Applied Sciences,
University of Tsukuba, \\
1-1-1 Tennodai,
Tsukuba, Ibaraki 305-8571, Japan
\item[] \emph{E-mails:} cfernand@mat.puc.cl,
richard@math.univ-lyon1.fr, rtiedra@mat.puc.cl
\end{itemize}
  }
\end{quote}

%--------------------------------------------------------------------------------------
% Abstract
%--------------------------------------------------------------------------------------

\begin{abstract}
We present an improved version of commutator methods for unitary operators under a
weak regularity condition. Once applied to a unitary operator, the method typically
leads to the absence of singularly continuous spectrum and to the local finiteness of
point spectrum. Large families of locally smooth operators are also exhibited. Half
of the paper is dedicated to applications, and a special emphasize is put on the
study of cocycles over irrational rotations. It is apparently the first time that
commutator methods are applied in the context of rotation algebras, for the study of
their generators.
\end{abstract}

\textbf{2000 Mathematics Subject Classification:} 81Q10, 47A35, 47B47.

\smallskip

\textbf{Keywords:} Unitary operators, spectral analysis, Mourre theory, cocycles over
rotations.

%\newpage
%\tableofcontents
%\newpage

%--------------------------------------------------------------------------------------
\section{Introduction}\label{Sec_Intro}
\setcounter{equation}{0}
%--------------------------------------------------------------------------------------

It is commonly accepted that commutator methods (in the sense of E. Mourre) are very
efficient tools for studying spectral properties of self-adjoint operators. These
technics are well developed and have  been formulated with a high degree of
optimality in \cite{ABG}. On the other hand, corresponding approaches for the study
of unitary operators exist but are much less numerous and still at a preliminary
stage of development (see for example \cite{ABCF06,Kat68,Put67,Put81} and references
therein). Accordingly, the aim of this paper is to extend commutator methods for
unitary operators up to an optimality equivalent to the one reached for self-adjoint
operators, and to present some applications.

Up to our knowledge, it is Putnam who presented the first application of commutator
methods for unitary operators. His original result \cite[Thm.~2.3.2]{Put67} reads as
follows\;\!: If $U$ is a unitary operator in a Hilbert space $\H$ and if there exists
a bounded self-adjoint operator $A$ in $\H$ such that $U^*AU-A$ is strictly positive,
then the spectrum of $U$ is purely absolutely continuous. An extension of this
statement to operators $A$ which are only semi-bounded is also presented in that
reference. However, in applications even this assumption of semi-boundedness is often
too restrictive, and a first attempt to extend the theory without this condition has
been proposed in \cite{ABCF06}. To do this, the authors had to impose some regularity
of $U$ with respect to $A$. In the language of Mourre theory for self-adjoint
operators, this regularity assumption corresponds to the inclusion $U\in C^2(A)$.
Apart from the removal of the semi-boundedness assumption, the applicability of the
theory was also significantly broadened by requiring a stronger positivity condition,
but only locally in the spectrum of $U$ and up to a compact term. Under these
assumptions, the spectrum of $U$ has been shown to consist in an absolute continuous
part together with a possible finite set of eigenvalues (see \cite[Thm.~3.3]{ABCF06}
for details).

The main goal of the present work is to weaken the regularity assumption of $U$ with
respect to $A$ to get a theory as optimal as the one in the self-adjoint case.
Accordingly, we show that all the results on the spectrum of $U$ hold if a condition
only slightly stronger than $U\in C^1(A)$ is imposed; namely, either if
$U\in C^{1,1}(A)$ and $U$ has a spectral gap, or if $U\in C^{1+0}(A)$ (see Section
\ref{Sec_Regu} and Theorem \ref{Thm_spec} for notations and details). In addition to
that result, we also obtain a broader class of locally $U$-smooth operators, and thus
more general limiting absorption principles for $U$ (see Proposition \ref{Prop_U}).
Our proofs are rather short and natural, thanks to an extensive use of the Cayley
transform.

The second half of the paper is dedicated to applications. In Sections
\ref{secexample1}-\ref{secexample2}, we consider perturbations of bilateral shifts as
well as perturbations of the free evolution. In both cases, we show that the
regularity assumption can be weaken down to the condition $U\in C^{1+0}(A)$. This
extends significantly some previous results of \cite{ABCF06} in similar situations.
Note that any result on the perturbations of the free evolution is of independent
interest since it provides information on the corresponding Floquet operators (see
\cite{ABCF06,Hua94,HL89} for more details on this issue).

In Section \ref{seccoc}, we determine spectral properties of cocycles over irrational
rotations. It is apparently the first time that commutator methods are applied in
the context of rotation algebras, for the study of their generators. For a class of
cocycles inspired by \cite{ILR93,Med94}, we show that the corresponding unitary
operators have purely Lebesgue spectrum. The result is not new \cite{ILM99} but our
proof is completely new and does not rely on the study of the Fourier coefficients of
the spectral measure. In addition, our approach leads naturally to a limiting
absorption principle and to the obtention of a large class of locally smooth
operators. This information could certainly not be deduced from the study of the
Fourier coefficients alone.

Finally, we treat in Section \ref{secexample4} vector fields on orientable manifolds.
Under suitable assumptions, we show that the unitary operators induced by a general
class of complete vector fields are purely absolutely continuous. This result
complements \cite[Sec.~2.9(ii)]{Put67}, where the author considers the case of
unitary operators induced by divergence-free vector fields on connected open subsets
of $\R^n$.

As a conclusion, let us stress that since unitary operators are bounded, the abstract
theory develop below is slightly simpler than its self-adjoint counterpart. This
feature, together with the fact that various dynamical systems are described by
unitary operators, suggests that the abstract results obtained so far could be applied
in a wide class of situations. For example, it would certainly be of interest to see
if spectral properties of CMV matrices \cite{GKPY11,Sim07}, quantum kicked rotors
\cite{Bell94,Gua09}, quantized Henon maps \cite{FW00,Wei03,Wei04} or integral
transforms \cite{GSL88,Sza01} can be studied with commutator methods.

%--------------------------------------------------------------------------------------
\section{Commutator methods for unitary operators}
\setcounter{equation}{0}
%--------------------------------------------------------------------------------------

%--------------------------------------------------------------------------------------
\subsection{Regularity with respect to $\boldsymbol A$}\label{Sec_Regu}
%--------------------------------------------------------------------------------------

We first recall some facts borrowed from \cite{ABG}. Let $\H$ be a Hilbert space with
norm $\|\;\!\cdot\;\!\|$ and scalar product
$\langle\;\!\cdot\;\!,\;\!\cdot\;\!\rangle$, and denote by $\B(\H)$ the set of
bounded linear operators in $\H$. Now, let $S\in\B(\H)$ and let $A$ be a self-adjoint
operator in $\H$ with domain $\dom(A)$.
For any $k\in\N$, we say that $S$ belongs
to $C^k(A)$, with notation $S\in C^k(A)$, if the map
\begin{equation}\label{re}
\R\ni t\mapsto\e^{-itA}S\e^{itA}\in\B(\H)
\end{equation}
is strongly of class $C^k$. In the case $k=1$, one has $S\in C^1(A)$ if the quadratic
form
$$
\dom(A)\ni\varphi\mapsto\big\langle A\;\!\varphi,S\varphi\big\rangle
-\big\langle\varphi,SA\;\!\varphi\big\rangle\in\C
$$
is continuous for the topology induced by $\H$ on $\dom(A)$. The operator
corresponding to the continuous extension of the form is denoted by $[A,S]\in\B(\H)$
and verifies $[A,S]=\slim_{\tau\to0}[A_\tau,S]$ with
$A_\tau:=(i\tau)^{-1}(\e^{i\tau A}-1)\in\B(\H)$.

Two slightly stronger regularity conditions than $S\in C^1(A)$ are provided by the
following definitions\;\!: $S$ belongs to $C^{1,1}(A)$, with notation
$S\in C^{1,1}(A)$, if
$$
\int_0^1\frac{\d t}{t^2}\;\!
\big\|\e^{-itA}S\e^{itA}+\e^{itA}S\e^{-itA}-2S\big\|<\infty,
$$
and $S$ belongs to $C^{1+0}(A)$, with notation $S\in C^{1+0}(A)$, if $S\in C^1(A)$
and
\begin{equation*}
\int_0^1\frac{\d t}{t}\;\!\big\|\e^{-itA}[A,S]\e^{itA}-[A,S]\big\|<\infty.
\end{equation*}
If we regard $C^1(A)$, $C^{1,1}(A)$, $C^{1+0}(A)$ and $C^2(A)$ as subspaces of $\B(\H)$, then
we have the following inclusions $C^2(A)\subset C^{1+0}(A)\subset C^{1,1}(A)\subset C^1(A)$.

Now, if $H$ is a self-adjoint operator in $\H$ with domain $\dom(H)$, we say that $H$ is of class $C^k(A)$
(respectively $C^{1,1}(A)$ or $C^{1+0}(A)$) if $(H-i)^{-1}\in C^k(A)$ (respectively
$(H-i)^{-1}\in C^{1,1}(A)$ or $(H-i)^{-1}\in C^{1+0}(A)$). If $H$ is of class $C^1(A)$, then the
quadratic form
$$
\dom(A)\ni\varphi\mapsto\big\langle A\;\!\varphi,(H-i)^{-1}\varphi\big\rangle
-\big\langle\varphi,(H-i)^{-1}A\;\!\varphi\big\rangle\in\C
$$
extends continuously to a bounded form defined by the operator
$\big[A,(H-i)^{-1}\big]\in\B(\H)$. Furthermore, the set $\dom(H)\cap\dom(A)$ is a
core for $H$ and the quadratic form
$
\dom(H)\cap\dom(A)\ni\varphi\mapsto\big\langle H\varphi,A\;\!\varphi\big\rangle
-\big\langle A\;\!\varphi,H\varphi\big\rangle
$
is continuous in the topology of $\dom(H)$ \cite[Thm.~6.2.10(b)]{ABG}. This form
extends uniquely to a continuous quadratic form on $\dom(H)$ which can be identified
with a continuous operator $[H,A]$ from $\dom(H)$ to the adjoint space $\dom(H)^*$.
Finally, the following relation holds in $\B(\H):$
\begin{equation}\label{2com}
\big[A,(H-i)^{-1}\big]=(H-i)^{-1}[H,A](H-i)^{-1}.
\end{equation}

%--------------------------------------------------------------------------------------
\subsection{Cayley transform}
%--------------------------------------------------------------------------------------

Let $U$ be a unitary operator in $\H$ with spectrum
$\sigma(U)\subset\T\equiv\big\{\theta\in\C \mid|\theta|=1\big\}$ and (complex) spectral measure
$E^U(\;\!\cdot\;\!)$. Our first result is a restatement of \cite[Thm.~5.1]{ABCF06} in
the framework of the previous subsection. The proof is inspired from the
corresponding proof in the self-adjoint case \cite[Prop.~7.2.10]{ABG}.

\begin{Proposition}[Virial Theorem for $U$]
Let $U$ and $A$ be respectively a unitary and a self-adjoint operator in $\H$, with
$U\in C^1(A)$. Then, $E^U(\{\theta\})\;\!U^*[A,U]\;\!E^U(\{\theta\})=0$ for each
$\theta\in\T$. In particular, one has
$\big\langle\varphi,U^*[A,U]\varphi\big\rangle=0$ for each eigenvector $\varphi$ of
$U$.
\end{Proposition}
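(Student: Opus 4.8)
The plan is to mimic the classical Virial Theorem proof as done for self-adjoint operators, replacing $\e^{itH}$ by powers of $U$. First I would note that since $U\in C^1(A)$, the commutator $[A,U]\in\B(\H)$ is well-defined as the strong limit $[A,U]=\slim_{\tau\to0}[A_\tau,U]$ with $A_\tau=(i\tau)^{-1}(\e^{i\tau A}-1)$, and similarly $[A,U^*]=\slim_{\tau\to0}[A_\tau,U^*]$; in particular $U^*[A,U]=-[A,U^*]U$ as bounded operators. Fix $\theta\in\T$ and abbreviate $P:=E^U(\{\theta\})$, the spectral projection onto $\ker(U-\theta)$. On $\Ran P$ one has $U=\theta$ and $U^*=\bar\theta$, so the point is to show $P\;\!U^*[A,U]\;\!P=0$.

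The key computation is to sandwich the approximate commutator between $P$'s. Since $UP=\theta P$ and $PU^*=\bar\theta P$ (equivalently $PU=\theta P$ after taking adjoints, using $|\theta|=1$), I would write, for each $\tau\neq 0$,
\begin{equation*}
P\big[A_\tau,U\big]P=P A_\tau U P-P U A_\tau P=\theta\big(P A_\tau P-P A_\tau P\big)=0,
\end{equation*}
where in $P A_\tau U P$ I used $UP=\theta P$ and in $P U A_\tau P$ I used $P U=\theta P$. Hence $P\big[A_\tau,U\big]P=0$ for every $\tau\neq0$. Now I would multiply by $U^*$ on the left of the bracket: from $P\big[A_\tau,U\big]P=0$ we get $P\;\!U^*[A_\tau,U]\;\!P = \bar\theta\cdot P[A_\tau,U]P = 0$ using $PU^*=\bar\theta P$. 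Taking the strong limit $\tau\to0$ and using $U\in C^1(A)$ yields $P\;\!U^*[A,U]\;\!P=0$, which is the claimed identity $E^U(\{\theta\})\;\!U^*[A,U]\;\!E^U(\{\theta\})=0$. The statement about eigenvectors follows immediately: if $\varphi$ is an eigenvector of $U$ with eigenvalue $\theta$, then $\varphi=P\varphi$, so $\big\langle\varphi,U^*[A,U]\varphi\big\rangle=\big\langle\varphi,P\;\!U^*[A,U]\;\!P\varphi\big\rangle=0$.

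The only subtle point — and the step I expect to need the most care — is the passage to the limit $\tau\to0$ together with the interplay between $A_\tau$ and the spectral projection $P$. One must be sure that $P\big[A_\tau,U\big]P$ is genuinely zero for fixed $\tau$ as a bounded-operator identity (this is fine since $A_\tau\in\B(\H)$, so no domain issues arise and $UP=\theta P$ holds on all of $\H$), and then that strong convergence $[A_\tau,U]\to[A,U]$ — which is exactly the content of $U\in C^1(A)$ as recalled in Section \ref{Sec_Regu} — is enough to conclude $P\;\!U^*[A_\tau,U]\;\!P\to P\;\!U^*[A,U]\;\!P$ strongly; since the left side vanishes identically in $\tau$, so does the limit. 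No uniform bound or dominated-convergence argument beyond strong continuity is needed, which is why the proof is short and avoids the technical overhead of the self-adjoint case.
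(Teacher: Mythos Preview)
Your proof is correct and takes essentially the same approach as the paper: both sandwich the bounded approximate commutator $[A_\tau,U]$ between eigenvectors (the paper) or the eigenprojection $P=E^U(\{\theta\})$ (you), use that $U$ acts as the scalar $\theta$ on that subspace to get an identically vanishing expression, and then pass to the strong limit $\tau\to0$ afforded by $U\in C^1(A)$. The only cosmetic difference is that the paper writes the computation for a pair of eigenvectors $\varphi_1,\varphi_2$ rather than for $P$, which amounts to the same thing.
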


\begin{proof}
One has to show that if $\varphi_j\in\H$ satisfies $U\varphi_j=\theta\varphi_j$ for
$j=1,2$, then $\big\langle\varphi_1,U^*[A,U]\varphi_2\big\rangle=0$. But, since
$U^*\varphi_1=\bar\theta\varphi_1$, this follows from the equalities
$$
\big\langle\varphi_1,U^*[A,U]\varphi_2\big\rangle
=\lim_{\tau\to0}\big\langle U\varphi_1,[A_\tau,U]\varphi_2\big\rangle
=\lim_{\tau\to0}\big\{\big\langle\theta\varphi_1,A_\tau\theta\varphi_2\big\rangle
-\big\langle\theta\bar\theta\varphi_1,A_\tau\varphi_2\big\rangle\big\}
=0.
$$
\end{proof}

\begin{Corollary}[Discrete spectrum of $U$]\label{Cor_finite}
Let $U$ and $A$ be respectively a unitary and a self-adjoint operator in $\H$, with
$U\in C^1(A)$. Suppose there exist a Borel set $\Theta\subset\T$, a number $a>0$ and
a compact operator $K\in\K(\H)$ such that
\begin{equation}\label{MourreU}
E^U(\Theta)\;\!U^*[A,U]\;\!E^U(\Theta)\ge aE^U(\Theta)+K.
\end{equation}
Then, the operator $U$ has at most finitely many eigenvalues in $\Theta$, each one of
finite multiplicity.
\end{Corollary}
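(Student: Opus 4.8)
The plan is to argue by contradiction, following the standard pattern of Mourre theory in the self-adjoint case. Suppose the conclusion fails, so that $U$ has either infinitely many distinct eigenvalues in $\Theta$ or an eigenvalue in $\Theta$ of infinite multiplicity. In both cases one extracts an orthonormal sequence $(\varphi_n)_{n\in\N}\subset\H$ and a sequence $(\theta_n)_{n\in\N}\subset\Theta$ with $U\varphi_n=\theta_n\varphi_n$ for all $n$. Since $\theta_n\in\Theta$ we have $E^U(\Theta)\varphi_n=\varphi_n$, and being orthonormal the $\varphi_n$ converge weakly to $0$ in $\H$.

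Next I would evaluate the assumed inequality \eqref{MourreU} on each $\varphi_n$. The Virial Theorem for $U$ (the Proposition above), applied with $\theta=\theta_n$, gives $\langle\varphi_n,U^*[A,U]\varphi_n\rangle=0$ for every $n$. Combining this with \eqref{MourreU} and $E^U(\Theta)\varphi_n=\varphi_n$ yields
$$
0=\big\langle\varphi_n,E^U(\Theta)\;\!U^*[A,U]\;\!E^U(\Theta)\varphi_n\big\rangle
\ge a\;\!\|\varphi_n\|^2+\langle\varphi_n,K\varphi_n\rangle
=a+\langle\varphi_n,K\varphi_n\rangle .
$$
Since $K$ is compact and $\varphi_n\rightharpoonup0$, one has $\langle\varphi_n,K\varphi_n\rangle\to0$, so passing to the limit gives $0\ge a$, contradicting $a>0$. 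Hence $U$ has at most finitely many eigenvalues in $\Theta$, each of finite multiplicity.

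The only points needing a word of care are bookkeeping rather than substance, and I do not expect a genuine obstacle. First, one should note that $U\in C^1(A)$ makes $U^*[A,U]$ a \emph{bounded self-adjoint} operator, so that the quadratic-form inequality \eqref{MourreU} is meaningful and may be tested on individual vectors: boundedness is part of the definition of $C^1(A)$, and self-adjointness follows from applying the $C^1(A)$ commutator calculus to $\id=U^*U$, which gives $U^*[A,U]=-[A,U^*]U=[A,U]^*U$, an operator visibly equal to its own adjoint. Second, one must check that failure of the conclusion really does produce an infinite orthonormal family of eigenvectors with eigenvalues in $\Theta$, which is immediate. The whole content of the corollary is thus the interplay between the Virial Theorem, the compactness of $K$, and the weak convergence to $0$ of an orthonormal sequence.
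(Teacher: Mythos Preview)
Your proof is correct and follows essentially the same route as the paper's: contradiction via an infinite orthonormal sequence of eigenvectors in $E^U(\Theta)\H$, the Virial Theorem to kill the left-hand side, and compactness of $K$ to make $\langle\varphi_n,K\varphi_n\rangle\to0$. The extra care you take in verifying that $U^*[A,U]$ is bounded and self-adjoint is not spelled out in the paper but is a welcome addition.
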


The following proof is standard but we provide it for completeness\;\!:

\begin{proof}
Let $\varphi\in \H$, $\|\varphi\|=1$, be an eigenvector of $U$ with corresponding
eigenvalue in $\Theta$. Then, it follows from \eqref{MourreU} and from the Virial
theorem that $\big\langle\varphi,K\varphi\big\rangle\le-a$. Now, assume that the
statement of the corollary is false. Then, there exists an infinite orthonormal
sequence $\{\varphi_j\}$ of eigenvectors of $U$ in $E^U(\Theta)\H$. In particular,
one has $\varphi_j\to0$ weakly in $\H$ as $j\to\infty$. And since $K$ is compact,
then $\langle\varphi_j,K\varphi_j\rangle\to0$ as $j\to\infty$, which contradicts the
inequality $\big\langle\varphi_j,K\varphi_j\big\rangle\le-a<0$.
\end{proof}

When an inequality \eqref{MourreU} is satisfied, we say that a Mourre estimate for
$U$ holds on $\Theta$. In the sequel, we suppose that the assumptions of Corollary
\ref{Cor_finite} hold for some open set $\Theta\subset\T$. Therefore, there exists
$\theta\in\Theta$ which is not an eigenvalue of $U$ and the range
$\Ran(1-\bar\theta U)$ of $1-\bar\theta U$ is dense in $\H$. In particular, the
operator
\begin{equation*}
H_\theta\;\!\varphi:=-i\big(1+\bar\theta U\big)\big(1-\bar\theta U\big)^{-1}\varphi,
\qquad\varphi\in\dom(H_\theta):=\Ran(1-\bar\theta U),
\end{equation*}
is self-adjoint due to a standard result on the Cayley transform
\cite[Thm.~8.4(b)]{Wei80}, and $H_\theta$ is bounded if and only if
$\theta\notin\sigma(U)$.

We now prove some simple but useful relations between $U$ and $H_\theta$.

\begin{Lemma}\label{easyrelations}
Suppose that the assumptions of Corollary \ref{Cor_finite} hold for some open set
$\Theta\subset\T$, and let $\theta\in\Theta$ and $H_\theta$ be as above. Then,
\begin{enumerate}
\item[(a)] the operator $H_\theta$ is of class $C^1(A)$ with
$\big[A,(H_\theta-i)^{-1}\big]=-\frac{i\bar \theta}2\;\![A,U]$,
\item[(b)] one has
$
[iH_\theta,A]
=2\;\!\big\{(1-\bar\theta U)^{-1}\big\}^*\;\!U^*[A,U]\;\!(1-\bar\theta U)^{-1}$ in
$\B\big(\dom(H_\theta),\dom(H_\theta)^*\big)$,
\item[(c)] if $U\in C^{1,1}(A)$, then $H_\theta$ is of class $C^{1,1}(A)$,
\item[(d)] if $U\in C^{1+0}(A)$, then $H_\theta$ is of class $C^{1+0}(A)$.
\end{enumerate}
\end{Lemma}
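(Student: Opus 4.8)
The key computational identity is the relation between the Cayley transform resolvent and $U$ itself. Since $H_\theta = -i(1+\bar\theta U)(1-\bar\theta U)^{-1}$, a direct algebraic manipulation gives $(H_\theta - i)^{-1} = \tfrac{i}{2}(1 - \bar\theta U)$, hence $(H_\theta-i)^{-1}$ is (up to an affine transformation with constant coefficients) just $\bar\theta U$. This single observation is the engine behind all four parts: every regularity property of $(H_\theta-i)^{-1}$ with respect to $A$ is equivalent to the corresponding property of $U$ with respect to $A$, because $c_1 + c_2 U \in C^k(A)$ (or $C^{1,1}(A)$, or $C^{1+0}(A)$) if and only if $U$ does, for scalars $c_1, c_2$ with $c_2 \neq 0$ — these classes are vector subspaces of $\B(\H)$ closed under scalar translation, and the defining seminorms (the $C^1$ commutator, the $C^{1,1}$ integral, the $C^{1+0}$ integral) are all insensitive to adding constants and scale linearly. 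Before starting I need to check that $H_\theta$ is a legitimate self-adjoint operator — this is already granted by the discussion preceding the lemma, since $\theta$ is chosen to not be an eigenvalue, so $\dom(H_\theta) = \Ran(1-\bar\theta U)$ is dense and the Cayley transform theorem applies.

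For part (a): compute $(H_\theta - i)^{-1} = \tfrac{i}{2}(1-\bar\theta U)$ explicitly (on the dense domain, then extended by continuity since both sides are bounded), deduce $(H_\theta-i)^{-1} \in C^1(A)$ from $U \in C^1(A)$, and then $[A,(H_\theta-i)^{-1}] = \tfrac{i}{2}[A, -\bar\theta U] = -\tfrac{i\bar\theta}{2}[A,U]$. For part (b): use formula \eqref{2com}, namely $[A,(H_\theta-i)^{-1}] = (H_\theta-i)^{-1}[H_\theta,A](H_\theta-i)^{-1}$, to extract $[iH_\theta, A]$. Since $(H_\theta-i)^{-1} = \tfrac{i}{2}(1-\bar\theta U)$ is invertible with inverse $-2i(1-\bar\theta U)^{-1}$ (note: possibly unbounded, mapping into $\dom(H_\theta)$), one inverts to get $[H_\theta, A] = (H_\theta-i)\,[A,(H_\theta-i)^{-1}]\,(H_\theta-i)$ interpreted in $\B(\dom(H_\theta),\dom(H_\theta)^*)$; substituting the expression from (a) and simplifying $(H_\theta - i) = -2i(1-\bar\theta U)^{-1}$ — here I should double check the sign and which factor gets the adjoint, using $\{(1-\bar\theta U)^{-1}\}^* = \{(1-\bar\theta U)^*\}^{-1}$ — yields $[iH_\theta, A] = i \cdot (-2i)\{(1-\bar\theta U)^{-1}\}^* \cdot (-\tfrac{i\bar\theta}{2})[A,U] \cdot (-2i)(1-\bar\theta U)^{-1}$. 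Collecting the scalar: $i \cdot (-2i) \cdot (-\tfrac{i\bar\theta}{2}) \cdot (-2i) = 2\bar\theta$, and then one uses $\bar\theta[A,U] = -U^*[A,U^*]^*$... actually more directly, since $U$ is unitary, $\bar\theta \cdot \{(1-\bar\theta U)^{-1}\}^* [A,U] (1-\bar\theta U)^{-1}$; using $U^* U = 1$ write $\bar\theta [A,U] = U^* \cdot (\bar\theta U^* [A,U])$... the cleaner route: $\{(1-\bar\theta U)^{-1}\}^* = \{(1-\theta U^*)^{-1}\}$ and one absorbs a $U^*$ by the identity $\bar\theta(1-\theta U^*)^{-1} = U^*(1-\bar\theta U)^{-1} \cdot (\text{check})$ — in any case the algebra collapses to the stated $2\{(1-\bar\theta U)^{-1}\}^* U^*[A,U] (1-\bar\theta U)^{-1}$.

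For parts (c) and (d): these are immediate once (a) is in hand, since $H_\theta$ being of class $C^{1,1}(A)$ means $(H_\theta-i)^{-1} \in C^{1,1}(A)$, and $(H_\theta-i)^{-1} = \tfrac{i}{2} - \tfrac{i\bar\theta}{2}U$ differs from a scalar multiple of $U$ by a constant; the $C^{1,1}(A)$ integral $\int_0^1 t^{-2}\|\e^{-itA}S\e^{itA} + \e^{itA}S\e^{-itA} - 2S\|\,\d t$ is unchanged when $S \mapsto S + c$ and scales by $|c_2|$ when $S \mapsto c_2 S$, so $U \in C^{1,1}(A) \Rightarrow (H_\theta-i)^{-1} \in C^{1,1}(A)$. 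Likewise for $C^{1+0}(A)$, using part (a) to identify $[A,(H_\theta-i)^{-1}]$ as a scalar multiple of $[A,U]$ so that the $C^{1+0}$ integral for $(H_\theta-i)^{-1}$ is a scalar multiple of that for $U$. The main obstacle is purely bookkeeping in part (b): getting the domain subtleties right (the inverse Cayley map $(1-\bar\theta U)^{-1}$ is unbounded when $\theta \in \sigma(U)$, so the identities live in the rigged space $\dom(H_\theta) \subset \H \subset \dom(H_\theta)^*$), correctly placing the adjoint, and tracking the accumulated factors of $i$ and $\bar\theta$ through the substitution so the final scalar comes out to exactly $2$ with no leftover phase. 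Everything else is essentially "scalar translations and dilations don't affect these regularity classes," which is routine.
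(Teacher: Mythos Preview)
Your proposal is correct and follows essentially the same route as the paper: the identity $(H_\theta-i)^{-1}=\tfrac{i}{2}(1-\bar\theta U)$ is exactly the engine the paper uses, and parts (a), (c), (d) are handled identically. For part (b) the paper does precisely what you sketch (invert \eqref{2com}, substitute (a), simplify), but note your scalar arithmetic slips: $i\cdot(-2i)\cdot(-\tfrac{i\bar\theta}{2})\cdot(-2i)=-2\bar\theta$, not $+2\bar\theta$, and the adjoint on the left factor does not come for free from $(H_\theta-i)$ --- it arises from the algebraic identity $-\bar\theta(1-\bar\theta U)^{-1}=(1-\theta U^*)^{-1}U^*=\{(1-\bar\theta U)^{-1}\}^*U^*$, which is the ``cleaner route'' you allude to.
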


\begin{proof}
Since $(H_\theta-i)^{-1}=\frac{i}{2}(1-\bar\theta U)$, one gets for any
$\varphi\in\dom(A)$
\begin{align*}
\big\langle A\;\!\varphi,(H_\theta-i)^{-1}\varphi\big\rangle
-\big\langle\varphi,(H_\theta-i)^{-1}A\;\!\varphi\big\rangle
&=\textstyle\big\langle A\;\!\varphi,\frac i2(1-\bar\theta U)\varphi\big\rangle
-\big\langle\varphi,\frac i2(1-\bar\theta U)A\;\!\varphi\big\rangle\\
&=\textstyle-\frac{i\bar\theta}2\big\{\big\langle A\;\!\varphi, U\varphi\big\rangle
-\big\langle\varphi,UA\;\!\varphi\big\rangle\big\}\\
&=\textstyle-\frac{i\bar\theta}2\big\langle\varphi,[A,U]\varphi\big\rangle,
\end{align*}
which implies point (a). Then, a successive use of \eqref{2com} and point (a) lead to
the following equalities in $\B\big(\dom(H_\theta),\dom(H_\theta)^*\big)$:
\begin{align*}
i[H_\theta,A]
=i(H_\theta-i)\big[A,(H_\theta-i)^{-1}\big](H_\theta-i)
&=-2\;\!\bar\theta(1-\bar\theta U)^{-1}[A,U](1-\bar\theta U)^{-1}\\
&=2\;\!(1-\theta U^*)^{-1}U^*[A,U](1-\bar\theta U)^{-1}\\
&=2\;\!\big\{(1-\bar\theta U)^{-1}\big\}^*\;\!U^*[A,U]\;\!(1-\bar\theta U)^{-1},
\end{align*}
which imply point (b). Point (c) follows from a direct calculation using the
equality $(H_\theta-i)^{-1}=\frac{i}{2}(1-\bar\theta U)$ which gives
\begin{align*}
&\int_0^1\frac{\d t}{t^2}\;\!
\big\|\e^{-itA}(H_\theta-i)^{-1}\e^{itA}
+\e^{itA}(H_\theta-i)^{-1}\e^{-itA}-2(H_\theta-i)^{-1}\big\|\\
&=\frac12\int_0^1\frac{\d t}{t^2}\;\!
\big\|\e^{-itA}U\e^{itA}+\e^{itA}U\e^{-itA}-2U\big\|.
\end{align*}
The proof of point (d) is similar.
\end{proof}

\begin{Remark}
Due to the simple relation $(H_\theta-i)^{-1}=\frac i2(1-\bar\theta U)$, any
regularity property of $U$ with respect to $A$ is equivalent to the same regularity
property of $(H_\theta-i)^{-1}$ with respect to $A$. In particular, if $U$ belongs to
one regularity class introduced in \cite[Ch.~5]{ABG}, then the resolvent
$(H_\theta-i)^{-1}$ belongs to the same regularity class. This observation might be
useful in applications.
\end{Remark}

Before moving to the next statement, we make two simple observations on the spectrum
$\sigma(H_\theta)$ and the spectral measure $E^{H_\theta}(\;\!\cdot\;\!)$ of
$H_\theta$. First, $\theta'\in\sigma(U)$ if and only if
$-i\;\!\frac{1+\bar\theta\theta'}{1-\bar\theta\theta'}\in\sigma(H_\theta)$ and
$\lambda\in\sigma(H_\theta)$ if and only if
$\theta\;\!\frac{\lambda+i}{\lambda-i}\in\sigma(U)$ (in particular, the point
$\theta'=\theta$ corresponds, as in a stereographic projection with origin $\theta$,
to the points $\lambda=\pm\infty$). Second, for any bounded Borel set $I\subset\R$,
the operator $E^{H_\theta}(I)$ belongs to $\B\big(\H,\dom(H_\theta)\big)$ and extends
by duality to an element of $\B\big(\dom(H_\theta)^*,\H\big)$.

We are now ready to show that a Mourre estimate for $U$ implies a Mourre estimate for
$H_\theta:$

\begin{Proposition}[Mourre estimate for $H_\theta$]\label{Prop_Mourre}
Suppose that the assumptions of Corollary \ref{Cor_finite} hold for some open set
$\Theta\subset\T$, and let $\theta\in\Theta$ and $H_\theta$ be as above. Then, for
any bounded Borel set
$
I\subset\big\{-i\frac{1+\bar\theta\theta'}{1-\bar\theta\theta'}
\mid\theta'\in\Theta\big\}
$
there exists a compact operator $K'\in\K(\H)$ such that
\begin{equation}\label{MourreH}
\textstyle E^{H_\theta}(I)\;\![iH_\theta,A]\;\!E^{H_\theta}(I)
\ge\frac a2\;\!E^{H_\theta}(I)+K',
\end{equation}
with $a>0$ as in \eqref{MourreU}.
\end{Proposition}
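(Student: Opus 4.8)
The plan is to transfer the Mourre estimate \eqref{MourreU} for $U$ through the Cayley transform, using the explicit formula for $[iH_\theta,A]$ provided by Lemma \ref{easyrelations}(b). First I would observe that, by the spectral mapping relation noted just before the statement, the condition on $I$ precisely says that $E^{H_\theta}(I)=E^{H_\theta}(I)\;\!E^U(\Theta)$, i.e. the spectral subspace $E^{H_\theta}(I)\H$ is contained in $E^U(\Theta)\H$; equivalently $E^U(\Theta)$ acts as the identity on $\Ran E^{H_\theta}(I)$. The key computational point is that $E^{H_\theta}(I)$ and the resolvent-type factors $(1-\bar\theta U)^{\pm1}$ can be intertwined: since $(H_\theta-i)^{-1}=\tfrac i2(1-\bar\theta U)$, the operator $(1-\bar\theta U)^{-1}$ restricted to $\Ran E^{H_\theta}(I)$ is a bounded operator (as $I$ is bounded, $(H_\theta-i)$ is bounded on that subspace), and it commutes with $E^{H_\theta}(I)$ because both are functions of $H_\theta$.

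Concretely, I would sandwich the identity from Lemma \ref{easyrelations}(b),
$$
[iH_\theta,A]=2\;\!\big\{(1-\bar\theta U)^{-1}\big\}^*\;\!U^*[A,U]\;\!(1-\bar\theta U)^{-1},
$$
between $E^{H_\theta}(I)$ on both sides. Writing $B:=(1-\bar\theta U)^{-1}E^{H_\theta}(I)$, which is bounded by the remark preceding the statement (indeed $E^{H_\theta}(I)\in\B(\H,\dom(H_\theta))$ and $(1-\bar\theta U)^{-1}$ is defined on $\dom(H_\theta)=\Ran(1-\bar\theta U)$), one gets
$$
E^{H_\theta}(I)\;\![iH_\theta,A]\;\!E^{H_\theta}(I)=2\;\!B^*\,U^*[A,U]\,B.
$$
Now insert $E^U(\Theta)$: since $\Ran B\subset\Ran E^{H_\theta}(I)\subset\Ran E^U(\Theta)$, we have $B=E^U(\Theta)B$, hence $B^*U^*[A,U]B=B^*\,E^U(\Theta)U^*[A,U]E^U(\Theta)\,B$. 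Applying \eqref{MourreU} gives
$$
E^{H_\theta}(I)\;\![iH_\theta,A]\;\!E^{H_\theta}(I)\ge 2\;\!B^*\big(aE^U(\Theta)+K\big)B=2a\,B^*B+2\,B^*KB.
$$
The term $2\,B^*KB$ is compact since $K$ is compact and $B$ is bounded, so it contributes the compact operator $K'':=2B^*KB$. It remains to handle $2a\,B^*B$: one must show $B^*B\ge\tfrac14E^{H_\theta}(I)$ up to a compact (indeed, up to exactly). Here $B^*B=E^{H_\theta}(I)\big\{(1-\bar\theta U)^{-1}\big\}^*(1-\bar\theta U)^{-1}E^{H_\theta}(I)$, and $\{(1-\bar\theta U)^{-1}\}^*(1-\bar\theta U)^{-1}=\big((1-\theta U^*)(1-\bar\theta U)\big)^{-1}=\tfrac14(H_\theta{-}i)^{-*}(H_\theta{-}i)^{-1}=\tfrac14(H_\theta^2{+}1)^{-1}$, so $B^*B=\tfrac14E^{H_\theta}(I)(H_\theta^2+1)^{-1}$. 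On the bounded set $I$ we have $(H_\theta^2+1)^{-1}\ge cE^{H_\theta}(I)$ for some $c=c(I)>0$, but $c$ need not be $1$; to recover exactly the constant $a/2$ rather than $ac/2$, I would instead keep things sharp by writing $B^*B=\tfrac14(H_\theta^2+1)^{-1}E^{H_\theta}(I)=\tfrac14E^{H_\theta}(I)-\tfrac14\big(1-(H_\theta^2+1)^{-1}\big)E^{H_\theta}(I)$ — but the correction term is not compact in general, so this naive splitting fails and one genuinely only gets $\tfrac{ac}{2}$.

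The honest fix, and the main subtlety of the proof, is to exploit the freedom in choosing $A$ — or rather to note that the factors $(1-\bar\theta U)^{-1}$ do not need to be estimated crudely. The cleanest route is to multiply through differently: start instead from Lemma \ref{easyrelations}(b) in the form $U^*[A,U]=\tfrac12(1-\theta U^*)[iH_\theta,A](1-\bar\theta U)$ on the relevant subspaces, or — most efficiently — to observe that the whole point of the Cayley transform argument is local in the spectrum, so we are free to shrink $\Theta$ (hence $I$) and to absorb the constant by redefining. In fact the statement only claims the constant $a/2$, which is what one obtains after noting that on $\sigma(H_\theta)\cap I$ one has $(H_\theta^2+1)^{-1}\ge \tfrac12\cdot(\text{something})$; but the clean statement $\ge \tfrac a2 E^{H_\theta}(I)+K'$ is forced precisely because at the point $\lambda=0$ (corresponding to $\theta'=-\theta$, the antipode) $(H_\theta^2+1)^{-1}$ attains its maximum value $1$, and near there $4B^*B\to E^{H_\theta}(I)$. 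So the plan is: prove $4\,B^*B\ge E^{H_\theta}(I)+(\text{compact})$ by writing $4B^*B-E^{H_\theta}(I)=-(H_\theta^2+1)^{-1}H_\theta^2\,E^{H_\theta}(I)$ and noting this is a bounded negative operator of the form $-f(H_\theta)$ with $f\ge0$; this is NOT compact, so the constant $a/2$ is unavoidable unless one localizes $I$ near $0$. I therefore expect the intended proof simply accepts the loss: using $(H_\theta^2+1)^{-1}\ge \tfrac12 E^{H_\theta}(I)$ whenever $I$ is chosen so small that $\sup_{\lambda\in I}\lambda^2\le 1$ — wait, that gives $\ge\tfrac12$ only for $|\lambda|\le1$ — giving $4B^*B\ge\tfrac12 E^{H_\theta}(I)$, hence $2aB^*B\ge\tfrac a4E^{H_\theta}(I)$, still not $a/2$. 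The main obstacle is thus to get the constant right: I would resolve it by the sharper observation that one can replace $A$ by a cutoff version or, better, note that \cite[Thm.~5.1]{ABCF06} and the self-adjoint theory only ever use the existence of \emph{some} positive constant, so in practice I would prove the estimate with constant $a/2$ by choosing $I$ inside a small enough neighborhood and using $4B^*B\ge \tfrac12 E^{H_\theta}(I) + \text{compact}$, then observing the compact remainder can be arranged — and if the exact constant $a/2$ is truly needed, it follows by taking $I$ small enough around any fixed point and using continuity of $\lambda\mapsto(\lambda^2+1)^{-1}$, which on a small interval is within a factor $2$ of its value... In short: write everything in terms of $B$, apply \eqref{MourreU}, split into $2aB^*B$ plus compact, and control $B^*B$ from below via the explicit formula $4B^*B=(H_\theta^2+1)^{-1}E^{H_\theta}(I)$ together with a bound on $I$; the one delicate point is squeezing out the constant $a/2$, which I would handle by an appropriate (possibly implicit) shrinking of $I$ consistent with the hypothesis that $I$ ranges over bounded subsets of the image set.
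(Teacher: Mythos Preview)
Your overall strategy is exactly the paper's: sandwich Lemma~\ref{easyrelations}(b) by $E^{H_\theta}(I)$, use $E^{H_\theta}(I)=E^U(\Theta)E^{H_\theta}(I)$ to invoke \eqref{MourreU}, absorb $K$ into a compact remainder, and bound $B^*B$ from below. The genuine error is a single sign/inverse slip in the last step that sends you into an unnecessary detour.

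From $(H_\theta-i)^{-1}=\tfrac i2(1-\bar\theta U)$ one gets $(1-\bar\theta U)^{-1}=\tfrac i2(H_\theta-i)$, \emph{not} a resolvent. Hence
\[
\big\{(1-\bar\theta U)^{-1}\big\}^*(1-\bar\theta U)^{-1}
=\Big(\tfrac i2(H_\theta-i)\Big)^*\Big(\tfrac i2(H_\theta-i)\Big)
=\tfrac14(H_\theta+i)(H_\theta-i)
=\tfrac14\big(H_\theta^2+1\big)
\ge\tfrac14,
\]
which is precisely the inequality the paper uses (written there as $\tfrac14|H_\theta-i|^2\ge\tfrac14$). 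You inverted the wrong factor and obtained $\tfrac14(H_\theta^2+1)^{-1}$, which is why the constant $a/2$ seemed out of reach and you were forced to contemplate shrinking $I$, appealing to continuity, or accepting a worse constant. With the correct computation, $2a\,B^*B\ge\tfrac a2\,E^{H_\theta}(I)$ holds for \emph{every} bounded $I$ in the stated set, with no further localization and no compact correction; the proof is then two lines after the Mourre estimate is inserted.
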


\begin{proof}
We know from Lemma \ref{easyrelations}(b) that
$$
E^{H_\theta}(I)\;\![iH_\theta,A]\;\!E^{H_\theta}(I)
=2\;\!E^{H_\theta}(I)\big\{(1-\bar\theta U)^{-1}\big\}^*\;\!U^*[A,U]\;\!
(1-\bar\theta U)^{-1}E^{H_\theta}(I).
$$
Since $(H_\theta-z)^{-1}$ and $U$ commute for each $z\in\C\setminus\R$, one has
$$
(1-\bar\theta U)^{-1}E^{H_\theta}(I)
=E^{H_\theta}(I)(1-\bar\theta U)^{-1}E^{H_\theta}(I).
$$
So, this relation together with its adjoint imply that
$$
E^{H_\theta}(I)\;\![iH_\theta,A]\;\!E^{H_\theta}(I)
=2\;\!E^{H_\theta}(I)\big\{(1-\bar\theta U)^{-1}\big\}^*E^{H_\theta}(I)
\;\!U^*[A,U]\;\!E^{H_\theta}(I)(1-\bar\theta U)^{-1}E^{H_\theta}(I).
$$
Since $E^{H_\theta}(I)=E^U(\Theta)\;\!E^{H_\theta}(I)$, it follows from the Mourre
estimate for $U$ that
$$
E^{H_\theta}(I)\;\![iH_\theta,A]\;\!E^{H_\theta}(I)
\ge2a\;\!E^{H_\theta}(I)\big\{(1-\bar\theta U)^{-1}\big\}^*(1-\bar\theta U)^{-1}
E^{H_\theta}(I)+K'
$$
with $K'\in\K(\H)$. One concludes by noting that
$
\big\{(1-\bar\theta U)^{-1}\big\}^*(1-\bar\theta U)^{-1}
=\frac14|H_\theta-i|^2\ge\frac14
$
holds on $E^{H_\theta}(I)\H$.
\end{proof}

We now prove a limiting absorption principle for the operator $H_\theta$ on the Besov
space $\G:=\big(\dom(A),\H\big)_{1/2,1}$ defined by real interpolation
\cite[Ch.~2]{ABG}. We give two versions of the result: one if we know that $U$ has a
spectral gap and another if we don't know it. We use the notation
$\sigma_{\rm p}(H_\theta)$ for the point spectrum of $H_\theta$.

\begin{Proposition}[Limiting absorption principle for $H_\theta$]\label{Prop_LAP}
Suppose that the assumptions of Corollary \ref{Cor_finite} hold for some open set
$\Theta\subset\T$, and let $\theta\in\Theta$ and $H_\theta$ be as above. Assume also
that
\begin{center}
(i) $U$ has a spectral gap and $U\in C^{1,1}(A)$\qquad or\qquad(ii) $U\in C^{1+0}(A)$.
\end{center}
Then, in any open bounded set
$
I\subset\big\{-i\frac{1+\bar\theta\theta'}{1-\bar\theta\theta'}
\mid\theta'\in\Theta\big\}
$
the operator $H_\theta$ has at most finitely many eigenvalues, each one of finite
multiplicity. Furthermore, for each $\lambda\in I\setminus\sigma_{\rm p}(H_\theta)$
the limits
$
\lim_{\varepsilon\searrow0}(H_\theta-\lambda\mp i\varepsilon)^{-1}
$
exist in the weak* topology of $\B(\G,\G^*)$, uniformly in $\lambda$ on each compact
subset of $I\setminus\sigma_{\rm p}(H_\theta)$. As a corollary, $H_\theta$ has no
singularly continuous spectrum in $I$.
\end{Proposition}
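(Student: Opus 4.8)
The plan is to transplant the whole problem to the self-adjoint operator $H_\theta$ and then invoke the abstract Mourre theory of \cite{ABG}. All the needed ingredients have in fact already been produced: Proposition \ref{Prop_Mourre} furnishes a Mourre estimate for $H_\theta$ on the bounded set $I$, with a strictly positive constant and up to a compact remainder, while Lemma \ref{easyrelations}(c)--(d) records the regularity of $H_\theta$ relative to $A$. So the proof amounts to checking that the hypotheses of the appropriate abstract limiting absorption principle in \cite[Ch.~7]{ABG} are met, and then to transcribing its conclusions in terms of $H_\theta$.

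Concretely, I would treat the two cases exactly as in the statement. Under hypothesis (ii), Lemma \ref{easyrelations}(d) shows that $H_\theta$ is of class $C^{1+0}(A)$; this regularity is stable under smooth functional calculus and needs no spectral gap, so the corresponding version of the abstract limiting absorption principle of \cite[Ch.~7]{ABG} applies to $H_\theta$ on the open set $I$. Under hypothesis (i), Lemma \ref{easyrelations}(c) shows that $H_\theta$ is of class $C^{1,1}(A)$; moreover, since $\sigma(H_\theta)$ is the image of $\sigma(U)$ under the M\"obius transformation $\theta'\mapsto-i\;\!\frac{1+\bar\theta\theta'}{1-\bar\theta\theta'}$ recalled before Proposition \ref{Prop_Mourre}, any spectral gap of $U$ produces a spectral gap of $H_\theta$, which places us in the framework of the $C^{1,1}(A)$ version of the abstract limiting absorption principle (see e.g. \cite[Thm.~7.4.1]{ABG}). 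In either case the abstract theory yields directly the three assertions of the proposition: $H_\theta$ has finitely many eigenvalues in $I$, each of finite multiplicity; the boundary values $\lim_{\varepsilon\searrow0}(H_\theta-\lambda\mp i\varepsilon)^{-1}$ exist in the weak* topology of $\B(\G,\G^*)$, locally uniformly in $\lambda\in I\setminus\sigma_{\rm p}(H_\theta)$; and, as a standard consequence of the existence of these boundary values, $H_\theta$ has no singularly continuous spectrum in $I$.

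I do not expect a genuine obstacle, as the analytic substance lies in Lemma \ref{easyrelations} and Proposition \ref{Prop_Mourre}. The only point demanding care is the precise matching of hypotheses, and it is exactly this point that forces the two-case formulation: for a self-adjoint operator without a spectral gap, the bare regularity $C^{1,1}(A)$ is not known to suffice for the limiting absorption principle (the resolvent condition does not then transfer to general functions of the operator), so in the absence of a gap one has to strengthen the assumption on $U$ to $C^{1+0}(A)$; and when $U$ does have a gap, $H_\theta$ inherits one via the Cayley transform, so that the optimal condition $U\in C^{1,1}(A)$ is enough. Beyond this bookkeeping, it only remains to note that the choices of $I$ (open and bounded) and of the Besov space $\G=(\dom(A),\H)_{1/2,1}$ are precisely those for which the results of \cite{ABG} are stated, so the last part of the proof is a mere transcription.
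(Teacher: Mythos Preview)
Your proposal is correct and follows essentially the same route as the paper: reduce everything to $H_\theta$ via Lemma~\ref{easyrelations}(c)--(d) and Proposition~\ref{Prop_Mourre}, then invoke the abstract self-adjoint theory. The only substantive discrepancy is the reference for case~(ii): the paper appeals to \cite[Thm.~0.1]{Sah97_2} rather than to \cite[Ch.~7]{ABG}, because the $C^{1+0}(A)$ version of the limiting absorption principle \emph{without} a spectral gap is Sahbani's contribution and is not covered by \cite[Thm.~7.4.1]{ABG} (the latter is what you and the paper both use for case~(i), where the gap of $U$ transfers to $H_\theta$ exactly as you describe).
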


If $U$ has a spectral gap, then the assumption $U\in C^{1,1}(A)$ of (i) is known to
be optimal for $H_\theta$ on the Besov scale $C^{s,p}(A)$ (see the Appendix 7.B of
\cite{ABG}). The assumption $U\in C^{1+0}(A)$ of (ii) is sufficient if $U$ has no
gap, but it is slightly stronger than the $C^{1,1}$-condition. Another approach not
requiring the existence of a gap exists and its regularity assumption is closer to
the $C^{1,1}$-condition than the $C^{1+0}$-condition. However, its implementation is
more involved since it requires the invariance of certain domain under the group
generated by $A$. So, we have decided not to present it for simplicity (see however
\cite[Sec.~7.5]{ABG} for details).

\begin{proof}[Proof of Proposition \ref{Prop_LAP}]
One first observes that $H_\theta$ has at most finitely many eigenvalues in $I$, each
one of finite multiplicity. Indeed, this follows either from the corresponding
statement for $U$ obtained in Corollary \ref{Cor_finite}, or from the Mourre estimate
\eqref{MourreH} and \cite[Cor.~7.2.11]{ABG}. Then, we know from
\cite[Lemma~7.2.12]{ABG} that a strict Mourre estimate holds locally on
$I\setminus\sigma_{\rm p}(H_\theta)$; that is, for any
$\lambda\in I\setminus\sigma_{\rm p}(H_\theta)$ and any
$\delta\in\big(0,\frac a2\big)$, there exists $\varepsilon>0$ such that
$$
\textstyle E^{H_\theta}(\lambda;\varepsilon)\;\![iH_\theta,A]\;\!
E^{H_\theta}(\lambda; \varepsilon)
\ge\big(\frac a2-\delta\big)\;\!E^{H_\theta}(\lambda;\varepsilon),
$$
where
$
E^{H_\theta}(\lambda;\varepsilon)
:= E^{H_\theta}\big((\lambda-\varepsilon,\lambda+\varepsilon)\big)
$.

Once this preliminary observation is made, the limiting absorption principle under
assumption (i) follows from Lemma \ref{easyrelations}(c), Proposition
\ref{Prop_Mourre} and \cite[Thm.~7.4.1]{ABG}, while the same result under assumption
(ii) follows from Lemma \ref{easyrelations}(d), Proposition \ref{Prop_Mourre} and
\cite[Thm.~0.1]{Sah97_2}.

The statement on the singularly continuous spectrum of $H_\theta$ in $I$ is then a
standard consequence of the limiting absorption principle and the finiteness of
$\sigma_{\rm p}(H_\theta)$ in $I$.
\end{proof}

%--------------------------------------------------------------------------------------
\subsection{Absolute continuity of $\;\!U$}
%--------------------------------------------------------------------------------------

We are now in a position to prove our main result on the spectrum of $U:$

\begin{Theorem}[Spectral properties of $U$]\label{Thm_spec}
Let $U$ and $A$ be respectively a unitary and a self-adjoint operator in $\H$. Assume
either that $U$ has a spectral gap and $U\in C^{1,1}(A)$, or that $U\in C^{1+0}(A)$.
Suppose also that there exist an open set $\Theta\subset\T$, a number $a>0$ and a
compact operator $K\in\K(\H)$ such that
\begin{equation}\label{avecK}
E^U(\Theta)\;\!U^*[A,U]\;\!E^U(\Theta)\ge a\;\!E^U(\Theta)+K.
\end{equation}
Then, $U$ has at most finitely many eigenvalues in $\Theta$, each one of finite
multiplicity, and $U$ has no singularly continuous spectrum in $\Theta$.
\end{Theorem}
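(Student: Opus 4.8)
The plan is to transfer everything from the unitary operator $U$ to its Cayley transform $H_\theta$, where the full machinery of Mourre theory from \cite{ABG} and \cite{Sah97_2} is available, and then transfer the spectral conclusions back. First I would observe that the hypotheses of the theorem are precisely those of Corollary \ref{Cor_finite}, so $U$ has at most finitely many eigenvalues in $\Theta$, each of finite multiplicity; this already disposes of the first assertion. The only remaining point is the absence of singularly continuous spectrum in $\Theta$.

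For this, I would fix an arbitrary point $\theta_0\in\Theta$ and show that $U$ has no singularly continuous spectrum in a neighbourhood of $\theta_0$ in $\Theta$; since $\Theta$ is open and can be covered by countably many such neighbourhoods, this suffices. If $\theta_0$ happens to be an eigenvalue, I first shrink $\Theta$ to exclude it (using that eigenvalues are finite in number and isolated points carry no singularly continuous measure anyway). Now pick $\theta\in\Theta$ that is not an eigenvalue of $U$ — possible by Corollary \ref{Cor_finite} — and form the self-adjoint Cayley transform $H_\theta=-i(1+\bar\theta U)(1-\bar\theta U)^{-1}$. By Lemma \ref{easyrelations}(c)-(d), the regularity hypothesis on $U$ (either $C^{1,1}(A)$ with a gap, or $C^{1+0}(A)$) passes to $H_\theta$, and by Proposition \ref{Prop_Mourre} the Mourre estimate \eqref{avecK} for $U$ yields a Mourre estimate for $H_\theta$ on any bounded Borel subset $I$ of $\big\{-i\frac{1+\bar\theta\theta'}{1-\bar\theta\theta'}\mid\theta'\in\Theta\big\}$. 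Then Proposition \ref{Prop_LAP} applies and gives: $H_\theta$ has no singularly continuous spectrum in such an $I$.

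Finally I would translate back. The homeomorphism $\theta'\mapsto-i\frac{1+\bar\theta\theta'}{1-\bar\theta\theta'}$ maps $\T\setminus\{\theta\}$ onto $\R$ and carries $E^U(\cdot)$ to $E^{H_\theta}(\cdot)$ (this is just functional calculus applied to $U=\theta\frac{H_\theta+i}{H_\theta-i}$), so it identifies the singularly continuous subspaces: $\H_{\rm sc}(U)=\H_{\rm sc}(H_\theta)$, and a Borel set $\Theta'\subset\Theta\setminus\{\theta\}$ is carried to a corresponding Borel set $I\subset\R$ with $E^U(\Theta')=E^{H_\theta}(I)$. Choosing $\Theta'$ to be a small open arc around $\theta_0$ whose image $I$ is bounded, the absence of singularly continuous spectrum of $H_\theta$ in $I$ becomes the absence of singularly continuous spectrum of $U$ in $\Theta'$. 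Covering $\Theta\setminus(\sigma_{\rm p}(U)\cup\{\theta\})$ by countably many such arcs (and recalling that the finitely many eigenvalues and the single point $\theta$ contribute nothing to the singularly continuous part) yields the claim.

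The only mildly delicate point is bookkeeping around the point $\theta$ itself and the handling of the ``point at infinity'': $\theta$ corresponds to $\lambda=\pm\infty$ under the stereographic correspondence, so when I cover $\Theta$ by arcs I must make sure each arc stays bounded away from $\theta$ so that its image $I\subset\R$ is bounded, as required by Proposition \ref{Prop_LAP}. Since $\theta$ can be chosen distinct from $\theta_0$ and $\Theta$ is open, this is always possible, and the union of such arcs exhausts $\Theta$ up to the single point $\theta$ — harmless for singular continuity. Beyond this the argument is essentially a formality once Propositions \ref{Prop_Mourre} and \ref{Prop_LAP} are in hand.
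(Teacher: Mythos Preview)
Your proposal is correct and follows essentially the same route as the paper: invoke Corollary \ref{Cor_finite} for the eigenvalue statement, pass to the Cayley transform $H_\theta$ via Proposition \ref{Prop_LAP}, and pull the absence of singularly continuous spectrum back through the spectral correspondence $E^U\leftrightarrow E^{H_\theta}$. The only cosmetic difference is in the final covering step: the paper picks \emph{two} non-eigenvalue points $\theta_1,\theta_2\in\Theta$ and uses $\Theta=(\Theta\setminus V_1)\cup(\Theta\setminus V_2)$ for small closed neighbourhoods $V_j\ni\theta_j$, thereby avoiding any separate argument about the ``point at infinity'', whereas you fix a single $\theta$ and handle the leftover singleton $\{\theta\}$ directly.
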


\begin{proof}
The properties of the eigenvalues of $U$ in $\Theta$ follow directly from Corollary
\ref{Cor_finite}. Now, take $\theta_1\in\Theta\setminus\sigma_{\rm p}(U)$. Then, we
get from Proposition \ref{Prop_LAP} and the correspondence between the spectral
measures of $U$ and $H_{\theta_1}$ (see \cite[Prop.~5.3.10]{BEH08}) that $U$ has no
singularly continuous spectrum in $\Theta\setminus V_1$, where $V_1\subset\Theta$ is
any closed neighborhood of $\theta_1$. The same argument with
$\theta_2\in\Theta\setminus\sigma_{\rm p}(U)$ such that $\theta_2\neq\theta_1$,
implies that $U$ has no singularly continuous spectrum in $\Theta\setminus V_2$,
where $V_2\subset\Theta$ is any closed neighborhood of $\theta_2$. Therefore, if
$V_1$ and $V_2$ are chosen small enough, one has
$\Theta=(\Theta\setminus V_1)\cup(\Theta\setminus V_2)$ and thus $U$ has no
singularly continuous spectrum in $\Theta$.
\end{proof}

\begin{Remark}\label{emptypp}
If the inequality \eqref{avecK} holds with $K=0$, then the operator $U$ has only
purely absolutely continuous spectrum in $\Theta$ (no point spectrum). Indeed, under
this stronger assumption, the inequality \eqref{MourreH} is satisfied with $K'=0$,
and thus a strict Mourre estimate holds for $H_\theta$ locally on $\Theta$. The
statement of Proposition \ref{Prop_LAP} can then be strengthened accordingly, and so
does the statement of Theorem \ref{Thm_spec}.
\end{Remark}

Our next goal is to exhibit locally $U$-smooth operators in our setting. Mimicking
the corresponding definition in the self-adjoint case, we say that an operator
$B\in\B(\H)$ is locally $U$-smooth on an open set $\Theta\subset\T$ if for each
closed set $\Theta'\subset \Theta$
\begin{equation}\label{cond_smooth}
\sum_{n\in\Z}\big\|B\;\!U^nE^U(\Theta')\big\|^2<\infty.
\end{equation}
We also recall from \cite[Thm.~2.2]{ABCF06} that \eqref{cond_smooth} is equivalent to
$$
\sup_{z\in\D,\,\varphi\in\H,\,\|\varphi\|=1}\;\!
\big|\big\langle\varphi,B\;\!\delta(U,z)E^U(\Theta')B^*\varphi\big\rangle\big|
<\infty,
$$
with $\D\subset\C$ the open unit disk and
$
\delta(U,z):=\big(1-zU^*\big)^{-1}-\big(1-{\bar z}^{-1}U^*\big)^{-1}
$
the unitary version of the difference of resolvents. Note that if $\Theta'=\T$ in
\eqref{cond_smooth}, then $B$ is globally $U$-smooth in the usual sense
\cite[Sec.~7]{Kat68}.

Now, let us observe that for $z\in\C$ with $|z|\neq 1$ one has
$$
(1-zU^*)^{-1}
=\left(1-z\bar\theta\;\!\frac{H_\theta-i}{H_\theta+i}\right)^{-1}
=\frac{H_\theta+i}{1-\bar\theta z}
\left(H_\theta+i\;\!\frac{1+\bar\theta z}{1-\bar\theta z}\right)^{-1},
$$
which implies that
\begin{align*}
\delta(U,z)
&=\big(1-|z|^2\big)\big(1-zU^*\big)^{-1}\big\{(1-zU^*)^{-1}\big\}^*\\
&=\big(1-|z|^2\big)\frac{H_\theta+i}{1-\bar \theta z}
\left(\frac{H_\theta+i}{1-\bar \theta z}\right)^*
\left(H_\theta+i\;\!\frac{1+\bar \theta z}{1-\bar \theta z}\right)^{-1}
\bigg\{\left(H_\theta+i\;\!\frac{1+\bar\theta z}{1-\bar\theta z}\right)^{-1}\bigg\}^*\\
&=\big(1-|z|^2\big)\frac{H_\theta^2+1}{|1-\bar \theta z|^2}
\left(H_\theta+i\;\!\frac{1+\bar\theta z}{1-\bar\theta z}\right)^{-1}
\bigg(H_\theta+\overline{i\;\!\frac{1+\bar \theta z}{1-\bar \theta z}}\bigg)^{-1}.
\end{align*}
Moreover, let $\Theta'\subset\T\setminus\{\theta\}$ be closed, and consider the
closed bounded set $J\subset\R$ given by
$$
J:=\left\{-i\;\!\frac{1+\bar \theta\theta'}{1-\bar \theta\theta'}
\mid\theta'\in\Theta'\right\}.
$$
Since $E^U(\Theta')=E^{H_\theta}(J)$, it follows that
$$
H_\theta^2E^U(\Theta')
=H_\theta^2E^{H_\theta}(J)
=\eta(H_\theta)E^{H_\theta}(J)
=\eta(H_\theta)E^{U}(\Theta')
$$
for some function $\eta\in C^\infty(\R;\R)$ with compact support. As a consequence,
one infers that
\begin{equation}\label{Eq_diff_un}
\delta(U,z) E^U(\Theta')
=\big(1-|z|^2\big)\frac{\eta(H_\theta)+1}{|1-\bar \theta z|^2}
\left(H_\theta+i\;\!\frac{1+\bar \theta z}{1-\bar \theta z}\right)^{-1}
\bigg(H_\theta+\overline{i\;\!\frac{1+\bar \theta z}{1-\bar \theta z}}\bigg)^{-1}
E^U(\Theta'),
\end{equation}
where $\eta(H_\theta)\in C^1(A)$ if $H_\theta$ is of class $C^1(A)$, due to
\cite[Thm.~6.2.5]{ABG}. Under the same regularity assumption on $H_\theta$, let us
also observe that for any $s\in[0,1)$ the operator
$$
\langle A\rangle^{-s}\big(\eta(H_\theta)+1\big)\langle A\rangle^s
\quad\hbox{with}\quad\langle A\rangle:=\sqrt{1+A^2},
$$
defined on $\dom\big(\langle A\rangle^s\big)$, extends continuously to an element of
$\B(\H)$ (see \cite[Prop.~5.3.1]{ABG}). With these preparations done, we can prove
the existence of a large class of locally $U$-smooth operators\;\!:

\begin{Proposition}[Locally $U$-smooth operators]\label{Prop_U}
Suppose that the assumptions of Theorem \ref{Thm_spec} hold for some open set
$\Theta\subset\T$. Then, each operator $B\in\B(\H)$ which extends continuously to an
element of $\B\big(\dom(\langle A\rangle^s)^*,\H\big)$ for some $s>1/2$ is locally
$U$-smooth on $\Theta\setminus\sigma_{\rm p}(U)$.
\end{Proposition}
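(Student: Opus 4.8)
The plan is to reduce the local $U$-smoothness of $B$ on $\Theta\setminus\sigma_{\rm p}(U)$ to a limiting absorption principle for $H_\theta$, which is already available from Proposition~\ref{Prop_LAP}. First I would fix a closed set $\Theta'\subset\Theta\setminus\sigma_{\rm p}(U)$; by a compactness and partition-of-unity argument it suffices to treat the case where $\Theta'$ is contained in a small closed neighbourhood of a single point of $\Theta\setminus\sigma_{\rm p}(U)$, so in particular we may choose $\theta\in\Theta\setminus\sigma_{\rm p}(U)$ with $\Theta'\subset\T\setminus\{\theta\}$ and pass to the Cayley transform $H_\theta$, which is then of class $C^{1,1}(A)$ (case (i)) or $C^{1+0}(A)$ (case (ii)) by Lemma~\ref{easyrelations}, and satisfies the Mourre estimate \eqref{MourreH} on the bounded interval $J$ corresponding to $\Theta'$.

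Next I would invoke the criterion recalled after \eqref{cond_smooth}: it is enough to bound, uniformly in $z\in\D$ and in unit vectors $\varphi\in\H$, the quantity $\big|\big\langle\varphi,B\,\delta(U,z)\,E^U(\Theta')\,B^*\varphi\big\rangle\big|$. Here I would plug in the explicit formula \eqref{Eq_diff_un} for $\delta(U,z)E^U(\Theta')$. The point is that the two factors $\big(H_\theta+i\tfrac{1+\bar\theta z}{1-\bar\theta z}\big)^{-1}$ are resolvents of $H_\theta$ at the spectral parameter $\lambda(z):=-i\tfrac{1+\bar\theta z}{1-\bar\theta z}$, whose imaginary part has a definite sign equal to $\mathrm{sgn}(1-|z|^2)$, while the prefactor $(1-|z|^2)/|1-\bar\theta z|^2$ is exactly $|\im\lambda(z)|$ up to the harmless factor $|1-\bar\theta z|^{-2}\cdot|1-\bar\theta z|^2 = 1$ hidden in $\lambda(z)$; more precisely a short computation gives $(1-|z|^2)/|1-\bar\theta z|^2=2\,|\im\lambda(z)|/|\lambda(z)-\lambda(z)|$-type identities, so that the whole expression is $\varepsilon\,(H_\theta-\mu\mp i\varepsilon)^{-1}(H_\theta-\mu\pm i\varepsilon)^{-1}$ times a bounded factor, with $\mu=\re\lambda(z)$ and $\varepsilon=|\im\lambda(z)|$. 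As $z$ ranges over $\D$ (with $|z|$ close to $1$, the only regime that matters), $\mu$ stays in the compact set $J$ away from $\sigma_{\rm p}(H_\theta)$ — here one uses that $\Theta'$ avoids $\sigma_{\rm p}(U)$, equivalently $J$ avoids $\sigma_{\rm p}(H_\theta)$ — so the limiting absorption principle of Proposition~\ref{Prop_LAP} bounds $(H_\theta-\mu\mp i\varepsilon)^{-1}$ uniformly as operators from $\G$ to $\G^*$.

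To convert this $\G\to\G^*$ bound into an $\H\to\H$ bound sandwiched by $B$ and $B^*$, I would absorb the $\eta(H_\theta)+1$ factor: write $\eta(H_\theta)+1=\langle A\rangle^{s}\big(\langle A\rangle^{-s}(\eta(H_\theta)+1)\langle A\rangle^{s}\big)\langle A\rangle^{-s}$, which is legitimate for $s\in[0,1)$ since the middle factor is bounded by \cite[Prop.~5.3.1]{ABG} as noted before the statement. Choosing $s\in(1/2,1)$ one then has the chain of continuous inclusions $\H\hookleftarrow\dom(\langle A\rangle^{s})^*\hookleftarrow\G^*$ and dually $\G\hookrightarrow\dom(\langle A\rangle^{s})\hookrightarrow\H$ (recalling $\G=(\dom A,\H)_{1/2,1}$), so that $\langle A\rangle^{-s}$ maps $\H$ continuously into $\G$ and $\langle A\rangle^{s}$ maps $\G^*$ continuously into $\H$. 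Combining, $\delta(U,z)E^U(\Theta')=\langle A\rangle^{s}\cdot(\text{bounded})\cdot\langle A\rangle^{s}R_\mp(\varepsilon)R_\pm(\varepsilon)\langle A\rangle^{s}\cdot(\text{bounded})\cdot\langle A\rangle^{s}$ type factorization shows that $B\,\delta(U,z)E^U(\Theta')\,B^*$ is uniformly bounded in $\B(\H)$ once $B$ extends to $\B(\dom(\langle A\rangle^{s})^*,\H)$, because then $B\langle A\rangle^{s}\in\B(\dom(\langle A\rangle^{2s})^*,\H)\supset\B(\G^*,\H)$ — here I should be slightly careful and instead factor as $\delta(U,z)E^U(\Theta')= c(z)\,(\eta(H_\theta)+1)R_\mp R_\pm$ and move only one $\langle A\rangle^{\pm s}$ to each side, matching the single power $s>1/2$ assumed for $B$.

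The main obstacle is precisely this last bookkeeping: verifying that the explicit operator in \eqref{Eq_diff_un} really has the form ``$\varepsilon$ times a product of two resolvents of $H_\theta$ at conjugate points, times a $C^1(A)$ function of $H_\theta$ of compact support,'' with the correct constants, and then threading the weights $\langle A\rangle^{\pm s}$ through so that exactly one power $s>1/2$ lands on each of $B$ and $B^*$ and the remaining middle operator is bounded from $\G$ to $\G^*$ uniformly in $z$. The analytic input — uniform boundedness of the boundary values of the resolvent on compact subsets of $I\setminus\sigma_{\rm p}(H_\theta)$ in $\B(\G,\G^*)$ — is exactly what Proposition~\ref{Prop_LAP} provides, and the finiteness of $\sigma_{\rm p}(H_\theta)$ in $I$ guarantees that, after shrinking $\Theta'$ if necessary, $J$ avoids the point spectrum entirely; everything else is the routine algebra of the Cayley transform already set up in the paragraph preceding the Proposition.
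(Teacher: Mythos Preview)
Your strategy is exactly the paper's: pick $\theta\in\Theta\setminus\sigma_{\rm p}(U)$ avoiding $\Theta'$, pass to $H_\theta$, plug in the identity \eqref{Eq_diff_un}, commute $\eta(H_\theta)+1$ past one weight $\langle A\rangle^{\pm s}$ via \cite[Prop.~5.3.1]{ABG}, and then invoke the limiting absorption principle of Proposition~\ref{Prop_LAP}. Two remarks, though, on the ``bookkeeping'' you yourself flag as the obstacle.

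First, the paper simplifies matters at the outset by reducing to $B=\langle A\rangle^{-s}$ with $s\in(1/2,1)$: since $B$ extends to $\B\big(\dom(\langle A\rangle^s)^*,\H\big)$, one may factor $B=\widetilde B\,\langle A\rangle^{-s}$ with $\widetilde B\in\B(\H)$, and then only $\langle A\rangle^{-s}$ needs to be shown locally $U$-smooth. This makes the ``threading of weights'' you describe unnecessary; exactly one $\langle A\rangle^{-s}$ sits on each side from the start, and only one of them has to be moved across $\eta(H_\theta)+1$.

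Second, and more importantly, you write that the LAP ``bounds $(H_\theta-\mu\mp i\varepsilon)^{-1}$ uniformly as operators from $\G$ to $\G^*$'' and then want to conclude that the sandwiched product $\varepsilon R_-R_+$ is bounded. But a $\G\to\G^*$ bound on each factor does not compose, so this step as written is a genuine gap. The paper closes it with the elementary resolvent identity
\[
|\varepsilon|\,\big\|\langle A\rangle^{-s}(H_\theta-\lambda-i\varepsilon)^{-1}\big\|_{\B(\H)}^2
=\big\|\im\big\{\langle A\rangle^{-s}(H_\theta-\lambda-i\varepsilon)^{-1}\langle A\rangle^{-s}\big\}\big\|_{\B(\H)},
\]
which is equivalent to $\varepsilon R_-R_+=\im R_-$; the right-hand side is then bounded, uniformly for $z$ in a neighbourhood $\O'\subset\D$ of $\Theta'$ (bounded away from $\theta$), directly by Proposition~\ref{Prop_LAP}. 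The regime $z\in\D\setminus\O'$ is handled separately and trivially, since there $\delta(U,z)E^U(\Theta')$ is already bounded in $\B(\H)$; you allude to this (``the only regime that matters'') but the paper makes the split explicit.
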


\begin{proof}
We can set $B=\langle A\rangle^{-s}$ and $s\in(1/2,1)$ without lost of generality.
So, we give the proof in this context.

Let $\Theta'\subsetneq\Theta\setminus\sigma_{\rm p}(U)$ be closed and fix
$\theta\in\Theta\setminus\{\sigma_{\rm p}(U)\cup\Theta'\}$. Let also $\O'\subset\D$
be an open set with closure $\cl(\O')$ satisfying
$$
\Theta'\subset\cl(\O')\cap\T
\subset\Theta \setminus\big(\sigma_{\rm p}(U)\cup\{\theta\}\big).
$$
Then, one infers from the definition of $\delta(U,z)$ that
\begin{equation}\label{yeyeyehhh}
\sup_{z\in\D\setminus\O'}\big\|\delta(U,z)E^U(\Theta')\big\|<\infty.
\end{equation}
We can thus restrict our attention to the case $z \in \O'$. But, we know from
\eqref{Eq_diff_un} that
\begin{align*}
&\langle A\rangle^{-s}\delta(U,z)E^U(\Theta')\langle A \rangle^{-s}\nonumber\\
&=\langle A \rangle^{-s} \big(\eta(H_\theta)+1\big)\langle A \rangle^s
\bigg\{\frac{1-|z|^2}{|1-\bar \theta z|^2}\langle A\rangle^{-s}
\left(H_\theta+i\;\!\frac{1+\bar\theta z}{1-\bar\theta z}\right)^{-1}
\bigg(H_\theta+\overline{i\;\!\frac{1+\bar\theta z}{1-\bar\theta z}}\bigg)^{-1}
E^{U}(\Theta')\langle A\rangle^{-s}\bigg\},
\end{align*}
for some $\eta\in C^\infty(\R;\R)$ with compact support. Furthermore, we know that
$\langle A\rangle^{-s}\big(\eta(H_\theta)+1\big)\langle A\rangle^s$ defined
on $\dom\big(\langle A\rangle^s\big)$ extends continuously to an element of $\B(\H)$.
So, we can further restrict our attention to the operator within the curly brackets
for $z\in\O'$.

Using the notations
$
\varepsilon
:=\im\big(-i\;\!\frac{1+\bar\theta z}{1-\bar\theta z}\big)
\equiv\frac{|z|^2-1}{|1-\bar\theta z|^2}
$
and
$
\lambda
:=\re\big(-i\;\!\frac{1+\bar\theta z}{1-\bar\theta z}\big)
\equiv-\frac{2\im(\theta\bar z)}{|1-\bar\theta z|^2}
$,
one gets the following bound for the norm the operator within the curly brackets\;\!:
\begin{equation}\label{youpie}
\big\||\varepsilon|\langle A\rangle^{-s}\big(H_\theta-\lambda-i\varepsilon\big)^{-1}
\big(H_\theta-\lambda+i\varepsilon\big)^{-1}E^{U}(\Theta')\langle A\rangle^{-s}\big\|
\le|\varepsilon|\;\!
\big\|\langle A\rangle^{-s}(H_\theta-\lambda-i\varepsilon)^{-1}\big\|^2.
\end{equation}
Now, the set
$\V:=\big\{-i\frac{1+\bar \theta z}{1-\bar\theta z}\mid z\in\cl(\O')\cap\T\big\}$ is
one of the compact subsets of $\R\setminus\sigma_{\rm p}(H_\theta)$ mentioned in
Proposition \ref{Prop_LAP}. Hence, one infers from Proposition \ref{Prop_LAP} that
\begin{align*}
\sup_{z\in \O'}|\varepsilon|\;\!
\big\|\langle A\rangle^{-s}(H_\theta-\lambda-i\varepsilon)^{-1}\big\|^2
&=\sup_{z\in \O'}\big\|\im\big\{\langle A\rangle^{-s}
(H_\theta-\lambda-i\varepsilon)^{-1}\langle A\rangle^{-s}\big\}\big\|\\
&\le\sup_{z\in \O'}\bigg\|\langle A\rangle^{-s}
\left(H_\theta+i\;\!\frac{1+\bar\theta z}{1-\bar\theta z}\right)^{-1}
\langle A\rangle^{-s}\bigg\|\\
&<\infty.
\end{align*}
This bound, together with \eqref{yeyeyehhh} and \eqref{youpie}, implies the
claim since any closed subset of $\Theta\setminus\sigma_{\rm p}(U)$ is at most the
union of two closed sets as $\Theta'$.
\end{proof}

To close the section, we give a corollary on the perturbations of $U$ which is useful
for applications. For shortness, we state the result in the $C^{1+0}(A)$ case\;\!:

\begin{Corollary}[Perturbations of $U$]\label{Cor_perti}
Let $A$ be a self-adjoint operator in $\H$ and let $U$ and $V$ be unitary operators
in $\H$, with $U,V\in C^{1+0}(A)$. Suppose that there exist an open set
$\Theta\subset\T$, a number $a>0$ and a compact operator $K\in\K(\H)$ such that
\begin{equation}\label{hyp_U}
E^U(\Theta)\;\!U^*[A,U]\;\!E^U(\Theta)\ge a\;\!E^U(\Theta)+K.
\end{equation}
In addition, assume that $V-1$ and $[A,V]$ belong to $\K(\H)$. Then, $VU$ has at most
finitely many eigenvalues in any closed subset of $\Theta$, each one of finite
multiplicity, and $VU$ has no singularly continuous spectrum in $\Theta$.
\end{Corollary}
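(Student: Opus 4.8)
The plan is to apply Theorem \ref{Thm_spec} to the unitary operator $W:=VU$, which requires two things: that $W\in C^{1+0}(A)$ and that a Mourre estimate of the form \eqref{avecK} holds for $W$ on a suitable open subset of $\Theta$. The regularity is immediate, since the class $C^{1+0}(A)$ is stable under products and adjoints (a direct check from the definitions, cf.\ \cite[Ch.~5]{ABG}): from $U,V\in C^{1+0}(A)$ one gets $W\in C^{1+0}(A)$ and, in $\B(\H)$, $[A,W]=[A,V]\;\!U+V[A,U]$, hence
$$
W^*[A,W]=U^*V^*[A,V]\;\!U+U^*[A,U]=U^*[A,U]+K_1,\qquad K_1:=U^*V^*[A,V]\;\!U\in\K(\H),
$$
because $[A,V]\in\K(\H)$. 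Furthermore $W-U=(V-1)U\in\K(\H)$, so $W^*-U^*\in\K(\H)$ too, and a routine Stone--Weierstrass argument yields $\eta(W)-\eta(U)\in\K(\H)$ for every $\eta\in C(\T)$.

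The heart of the proof is the transfer of the Mourre estimate \eqref{hyp_U} from $U$ to $W$, and I expect this to be the main difficulty: the spectral projections $E^U(\Theta)$ and $E^W(\Theta)$ for the \emph{open} set $\Theta$ need not differ by a compact operator, so one cannot simply substitute one for the other, and the fix is to work on \emph{closed} subsets with continuous cut-offs. Fix a closed set $\Theta_0\subset\Theta$ and, by Urysohn's lemma, pick $\eta\in C(\T;[0,1])$ with $\eta=1$ on $\Theta_0$ and $\eta=0$ on $\T\setminus\Theta$; then $\eta(W)\;\!E^W(\Theta_0)=E^W(\Theta_0)$ and $\eta(U)=\eta(U)\;\!E^U(\Theta)=E^U(\Theta)\;\!\eta(U)$. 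Starting from $E^W(\Theta_0)\;\!W^*[A,W]\;\!E^W(\Theta_0)=E^W(\Theta_0)\;\!\eta(W)\;\!W^*[A,W]\;\!\eta(W)\;\!E^W(\Theta_0)$, one uses the two facts above to replace, modulo $\K(\H)$, the factors $\eta(W)$ by $\eta(U)$ and $W^*[A,W]$ by $U^*[A,U]$; inserting the projections $E^U(\Theta)$, applying \eqref{hyp_U}, and finally using $\eta(U)^2-\eta(W)^2\in\K(\H)$ (since $\eta^2\in C(\T)$) together with $\eta(W)^2E^W(\Theta_0)=E^W(\Theta_0)$, one arrives at
$$
E^W(\Theta_0)\;\!W^*[A,W]\;\!E^W(\Theta_0)\ge a\;\!E^W(\Theta_0)+K',\qquad K'\in\K(\H),
$$
with the same $a>0$ as in \eqref{hyp_U}.

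It then remains to harvest the conclusions. For any closed $\Theta_0\subset\Theta$ the estimate just obtained, combined with Corollary \ref{Cor_finite} (whose hypotheses allow $\Theta_0$ to be merely Borel), shows that $W$ has at most finitely many eigenvalues in $\Theta_0$, each of finite multiplicity; this gives the first assertion. For the singularly continuous spectrum, write the open set $\Theta$ as a countable union of open sets $\Theta_j$ with $\overline{\Theta_j}\subset\Theta$; applying the displayed estimate with $\Theta_0=\overline{\Theta_j}$ and multiplying both sides by $E^W(\Theta_j)=E^W(\Theta_j)E^W(\overline{\Theta_j})$ produces a Mourre estimate of the form \eqref{avecK} for $W$ on the open set $\Theta_j$. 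Since $W\in C^{1+0}(A)$, Theorem \ref{Thm_spec} then shows that $W$ has no singularly continuous spectrum in $\Theta_j$, and since $\Theta=\bigcup_j\Theta_j$ the same holds in $\Theta$.
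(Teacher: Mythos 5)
Your proof is correct and follows essentially the same route as the paper: one shows $(VU)^*[A,VU]-U^*[A,U]\in\K(\H)$ and $\eta(VU)-\eta(U)\in\K(\H)$ via Stone--Weierstrass, transfers the Mourre estimate \eqref{hyp_U} to $VU$ through continuous cut-off functions $\eta$ vanishing off $\Theta$ and equal to $1$ on the relevant subset, and then invokes Theorem \ref{Thm_spec} (your additional explicit appeal to Corollary \ref{Cor_finite} and the countable covering of $\Theta$ by open sets with closure in $\Theta$ only makes explicit what the paper leaves implicit).
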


Note that since $UV=U(VU)\;\!U^*$, the operators $UV$ and $VU$ are unitarily
equivalent and thus have the same spectral properties.

\begin{proof}
The product $VU$ belongs to $C^{1+0}(A)$, since $U$ and $V$ belong to $C^{1+0}(A)$
(see \cite[Prop.~5.2.3(b)]{ABG}).  Furthermore, a direct computation using the
inclusion $[A,V]\in\K(\H)$ implies that
\begin{equation}\label{eq1}
(VU)^*[A,VU]-U^*[A,U]\in\K(\H).
\end{equation}
Now, since $VU-U=(V-1)U$ is compact, it follows from an application of the
Stone-Weierstrass theorem that $\eta(VU)-\eta(U)$ is compact for any $\eta\in C(\T)$.
Therefore, for any $\eta\in C(\T;\R)$ satisfying $\eta(U)=\eta(U)E^U(\Theta)$, one
infers from \eqref{hyp_U} and \eqref{eq1} that
$$
\eta(VU)\;\!(VU)^*[A,VU]\;\!\eta(VU)
=\eta(U)E^U(\Theta)\;\!U^*[A,U]E^U(\Theta)\;\!\eta(U)+K_1
\ge a\;\!\eta^2(U)+K_2
=a\;\!\eta^2(VU)+K_3,
$$
with $K_1,K_2,K_3\in\K(\H)$. Since for each open set $\Theta'\subset\T$ with closure
in $\Theta$, one can find $\eta\in C(\T;\R)$ such that $\eta(U)=\eta(U)E^U(\Theta)$
and $E^{VU}(\Theta')=E^{VU}(\Theta')\eta(VU)$, it follows that
$$
E^{VU}(\Theta')\;\!(VU)^*[A,VU]\;\!E^{VU}(\Theta')
=E^{VU}(\Theta')\;\!\eta(VU)(VU)^*[A,VU]\;\!\eta(VU)E^{VU}(\Theta')
\ge a\;\!E^{VU}(\Theta')+K_4,
$$
with $K_4\in\K(\H)$. Thus, the claims for $VU$ follow directly from Theorem
\ref{Thm_spec}.
\end{proof}

\begin{Example}\label{Ex_compact}
A typical choice for the perturbation $V$ in Corollary \ref{Cor_perti} is $V=\e^{iB}$
with $B=B^*\in\K(\H)$ satisfying $B\in C^{1+0}(A)$ and $[A,B]\in\K(\H)$. Indeed, in
such case we know from \cite[Prop.~5.1.5]{ABG} that $B^k\in C^1(A)$ and that
$\big[A,B^k\big]=\sum_{\ell=0}^{k-1}B^{k-1-\ell}[A,B]B^\ell$ for each $k\in\N^*$.
Therefore, one has for each $\varphi\in\dom(A)$
$$
\big\langle A\;\!\varphi,\e^{iB}\varphi\big\rangle
-\big\langle\varphi,\e^{iB}A\;\!\varphi\big\rangle
=\sum_{k\ge1}\frac{i^k}{k!}\big\langle A\;\!\varphi,B^k\varphi\big\rangle
-\big\langle\varphi,B^kA\;\!\varphi\big\rangle
=\left\langle\varphi,\sum_{k\ge1}\frac{i^k}{k!}
\sum_{\ell=0}^{k-1}B^{k-1-\ell}[A,B]B^\ell\varphi\right\rangle,
$$
where $\sum_{k\ge1}\frac{i^k}{k!}\sum_{\ell=0}^{k-1}B^{k-1-\ell}[A,B]B^\ell$ is a
norm convergent sum of compact operators. It follows that $\e^{iB}\in C^1(A)$ with
$
\big[A,\e^{iB}\big]
=\sum_{k\ge1}\frac{i^k}{k!}\sum_{\ell=0}^{k-1}B^{k-1-\ell}[A,B]B^\ell
\in\K(\H)
$.
Then, a simple calculation using the expression for $\big[A,\e^{iB}\big]$, the
inclusion $B\in C^{1+0}(A)$ and the fact that $C^{1+0}(A)$ is a Banach space, shows
that $\e^{iB}\in C^{1+0}(A)$. Since one also has $\e^{iB}-1\in \K(\H)$, all the
assumptions on $V$ of Corollary \ref{Cor_perti} are satisfied.
\end{Example}

%--------------------------------------------------------------------------------------
\section{Applications}
\setcounter{equation}{0}
%--------------------------------------------------------------------------------------

%--------------------------------------------------------------------------------------
\subsection{Perturbations of bilateral shifts}\label{secexample1}
%--------------------------------------------------------------------------------------

Consider a unitary operator $U$ in a Hilbert space $\H$ and a subspace $\M\subset\H$
such that
$$
\M\perp U^n(\M)~~\hbox{for each }n\in\Z\setminus\{0\}
\qquad\hbox{and}\qquad
\H=\bigoplus_{n\in\Z}U^n(\M).
$$
Such a unitary operator $U$, called a bilateral shift on $\H$ with wandering subspace
$\M$, appears in various instances as in F. and M. Riesz theorem or in ergodic theory
(see \cite[Sec.~3]{Put81}). Using the notation $\varphi\equiv\{\varphi_n\}$ for
elements of $\H$, we define the (number) operator
$$
\textstyle
A\;\!\varphi:=\{n\;\!\varphi_n\},\qquad\varphi\in\dom(A)
:=\left\{\psi\in\H\mid\sum_{n\in\Z}n^2\;\!\|\psi_n\|^2<\infty\right\}.
$$
The operator $A$ is self-adjoint since it can be regarded as a maximal multiplication
operator acting in a $\ell^2$-space. Furthermore, a direct calculation shows that
$
\big\langle A\;\!\varphi,U\varphi\big\rangle
-\big\langle\varphi,UA\;\!\varphi\big\rangle
=\big\langle\varphi,U\varphi\big\rangle
$
for each $\varphi\in\dom(A)$, meaning that $U\in C^2(A)\subset C^{1+0}(A)$ and that
$U^*[A,U]=U^*U=1$.

Therefore, we infer from Theorem \ref{Thm_spec} and Remark \ref{emptypp} that $U$ has
purely absolutely continuous spectrum, as it is well known. Now, let $V$ be another
unitary operator with $V\in C^{1+0}(A)$, $V-1\in\K(\H)$ and $[A,V]\in\K(\H)$. Then,
we deduce from Corollary \ref{Cor_perti} that the operator $VU$ has purely absolutely
continuous spectrum  except at a possible finite number of points in $\T$, where $VU$
has eigenvalues of finite multiplicity.

This result generalizes at the same time the results of \cite[Sec.~3]{Put81} and
\cite[Sec.~4.1]{ABCF06} (in Theorem 4.1 of \cite{ABCF06}, a possible wandering
subspace of the Hilbert space $\ltwo(\R^m)$ is the set of states having support in
$\{x+ty\in\R^m\mid x\cdot y=0,~t\in[0,T]\}$).

%--------------------------------------------------------------------------------------
\subsection{Perturbations of the free evolution}\label{secexample2}
%--------------------------------------------------------------------------------------

In this subsection, we present an extension of Theorem 4.2 of \cite{ABCF06}. So, for
any $m\in\N^*$ let $\H:=\ltwo(\R^m)$ and let $Q:=(Q_1,\ldots,Q_m)$ and
$P:=(P_1,\ldots,P_m)$ be the usual families of momentum and position operators
defined
by
$$
(Q_j\varphi)(x):=x_j\varphi(x)
\qquad\hbox{and}\qquad
(P_j\varphi)(x):=-i(\partial_j\varphi)(x)
$$
for any Schwartz function $\varphi\in\S(\R^m)$ and $x\in\R^m$. Our aim is to treat
perturbations of the free evolution $U:=\e^{-iTP^2}$, which is known to have purely
absolutely continuous spectrum $\sigma(U)=\sigma_{\rm ac}(U)=\T$ for each $T>0$.
Following \cite[Def.~1.1]{Yok98}, we consider the operator
$$
A\;\!\varphi
:=\frac12\;\!\big\{(P^2+1)^{-1}P\cdot Q+Q\cdot P(P^2+1)^{-1}\big\}\;\!\varphi,
\qquad\varphi\in\S(\R^m),
$$
which is essentially self-adjoint due to \cite[Prop.~7.6.3(a)]{ABG}. Then, a direct
calculation shows that
$
\big\langle A\;\!\varphi,U\varphi\big\rangle
-\big\langle\varphi,UA\;\!\varphi\big\rangle
=\big\langle\varphi,2TUP^2(P^2+1)^{-1}\varphi\big\rangle
$
for each $\varphi\in\S(\R^m)$, meaning that $[A,U]=2TUP^2(P^2+1)^{-1}$ by the density
of $\S(\R^m)$ in $\dom(A)$. Therefore, one has $U\in C^1(A)$ with
$U^*[A,U]=2TP^2(P^2+1)^{-1}$, and further computations on $\S(\R^m)$ show that
$U\in C^2(A)\subset C^{1+0}(A)$. Now, observe that for any open set $\Theta\subset\T$
with closure $\cl(\Theta)$ satisfying $\cl(\Theta)\cap\{1\}=\varnothing$, there
exists $\delta>0$ such that
$$
E^U(\Theta)
=E^{P^2}\big([\delta,\infty)\big)E^U(\Theta)
$$
and
$$
E^U(\Theta)\;\!U^*[A,U]\;\!E^U(\Theta)
=2TE^U(\Theta)P^2(P^2+1)^{-1}E^{P^2}\big([\delta,\infty)\big)E^U(\Theta)
\ge2T\delta(\delta+1)^{-1}E^U(\Theta).
$$
By combining what precedes together with Corollary \ref{Cor_perti}, we obtain the
following\;\!:

\begin{Lemma}
Let $V$ be a unitary operator in $\H$ satisfying $V\in C^{1+0}(A)$, $V-1\in\K(\H)$
and $[A,V]\in\K(\H)$. Then, the eigenvalues of the operator $VU$ outside $\{1\}$ are
of finite multiplicity and can accumulate only at $\{1\}$. Furthermore, $VU$ has no
singularly continuous spectrum.
\end{Lemma}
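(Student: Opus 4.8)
The plan is to read the statement off directly from Corollary \ref{Cor_perti}, combined with the Mourre estimate established just before the Lemma. Recall that the preceding computation gives $U=\e^{-iTP^2}\in C^2(A)\subset C^{1+0}(A)$ with $U^*[A,U]=2TP^2(P^2+1)^{-1}$, and that for every open set $\Theta\subset\T$ whose closure $\cl(\Theta)$ satisfies $\cl(\Theta)\cap\{1\}=\varnothing$ there is a number $\delta>0$ with
$$
E^U(\Theta)\;\!U^*[A,U]\;\!E^U(\Theta)\ge 2T\delta(\delta+1)^{-1}E^U(\Theta).
$$
This is exactly an estimate of the form \eqref{hyp_U} with $a:=2T\delta(\delta+1)^{-1}>0$ and $K=0$. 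Since the hypotheses imposed on $V$ in the Lemma ($V\in C^{1+0}(A)$, $V-1\in\K(\H)$, $[A,V]\in\K(\H)$) are precisely those required by Corollary \ref{Cor_perti}, all assumptions of that corollary hold on each such $\Theta$.

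First I would apply Corollary \ref{Cor_perti} to each open set $\Theta$ with $\cl(\Theta)\cap\{1\}=\varnothing$. This yields that $VU$ has at most finitely many eigenvalues in any closed subset of $\Theta$, each of finite multiplicity, and that $VU$ has no singularly continuous spectrum in $\Theta$. Next I would run an exhaustion of $\T\setminus\{1\}$ by such sets: any closed subset of $\T\setminus\{1\}$ is compact, hence contained in some open $\Theta$ with $\cl(\Theta)\cap\{1\}=\varnothing$, so letting such closed subsets increase to $\T\setminus\{1\}$ shows that the eigenvalues of $VU$ outside $\{1\}$ are of finite multiplicity and can accumulate only at $1$.

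For the absence of singularly continuous spectrum, the same exhaustion gives that $VU$ has no singularly continuous spectrum in $\T\setminus\{1\}$. Since a singularly continuous measure carries no mass on a single point, the excluded point $1$ contributes nothing, and therefore $VU$ has no singularly continuous spectrum at all.

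I do not expect any genuine obstacle here; the only point needing a little care is that the Mourre constant $a$ and the threshold $\delta$ depend on $\Theta$ and degenerate as $\Theta$ shrinks towards $1$, which is exactly why the conclusions are uniform only on closed subsets of $\T\setminus\{1\}$ and the point $1$ must be kept out. The remainder is the standard bookkeeping already used in the proof of Theorem \ref{Thm_spec}.
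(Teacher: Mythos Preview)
Your proof is correct and follows exactly the paper's approach: the paper simply writes ``By combining what precedes together with Corollary~\ref{Cor_perti}, we obtain the following'' before stating the Lemma, and you have spelled out the implicit exhaustion of $\T\setminus\{1\}$ and the single-point argument for the singularly continuous part that this sentence is meant to convey.
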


%--------------------------------------------------------------------------------------
\subsection{Cocycles over irrational rotations}\label{seccoc}
%--------------------------------------------------------------------------------------

Consider two unitary operators $U$ and $V$ in a Hilbert space $\H$ satisfying, for
some irrational $\theta\in[0,1)$, the commutation relation
\begin{equation}\label{eq_rot}
UV=\e^{2\pi i\theta}VU.
\end{equation}
The universal $C^*$-algebra $A_\theta$ generated by such a pair of unitaries is known
as the irrational rotation algebra \cite[Sec.~12.3]{W-O93} and has attracted a lot
attention these last years. It is known that both $U$ and $V$ have full spectrum,
that is,
$$
\sigma(U)=\sigma(V)=\T.
$$

Our aim is to study the nature of the spectrum of $U$ when the pair $(U,V)$ is the
following\;\!: Let $\H:=\ltwo\big([0,1)\big)$ be the $\ltwo$-space over the interval
$[0,1)$ with addition modulo $1$, and let $V\in\B(\H)$ be the unitary operator of
multiplication by the independent variable, \ie
$$
(V\varphi)(x):=\e^{2\pi ix}\varphi(x)
\quad\hbox{for all }\varphi\in\H\hbox{ and a.e. }x \in [0,1).
$$
Given $f:[0,1)\to\R$ a measurable function, let also $U\in\B(\H)$ be the unitary
operator associated with the cocycle $f$ over the rotation by $\theta$, \ie
\begin{equation}\label{defdeV}
(U\varphi)(x):=\e^{2\pi if(x)}\varphi([x+\theta])
\quad\hbox{for all }\varphi\in\H\hbox{ and a.e. }x \in [0,1),
\end{equation}
where $[y]:=y$ modulo $1$. Then, $U$ and $V$ satisfy the commutation relation
\eqref{eq_rot}, and the spectrum of $U$ is either purely punctual, purely singularly
continuous or purely Lebesgue \cite[Thm.~3]{Hel86} (the precise nature of the
spectrum highly depends on properties of the pair $(f,\theta)$, see for example
\cite{ILM99,ILR93,Med94}).

In the sequel, we treat the case $f:=m\id+h$, with $m\in\Z^*$, $\id:[0,1)\to[0,1)$
the identity function and $h:[0,1)\to\R$ an absolutely continuous function satisfying
$h(0)=h(1)$. For this, we introduce the self-adjoint operator $P$ in $\H$ defined by
$$
P\varphi:=-i\varphi',\quad
\varphi\in\dom(P):=\big\{\varphi \in \H\mid \varphi \hbox{ is absolutely continuous},
~\varphi'\in\ltwo\big([0,1)\big)~\hbox{and}~\varphi(0)=\varphi(1)\big\}.
$$
Then, we observe that an integration by parts gives for each $\varphi \in \dom(P)$
that
\begin{equation*}
\big\langle P\varphi,U\varphi\big\rangle-\big\langle\varphi,UP\varphi\big\rangle
=\big\langle\varphi,2\pi\big(m+h'\big)U\varphi\big\rangle,
\end{equation*}
with $h'$ the operator of multiplication by $h'$. Therefore, if
$h'\in\linf\big([0,1)\big)$, one infers that $U\in C^1(P)$ with
$U^*[P,U]=2\pi\big(m+h'([\,\cdot\;\!-\theta])\big)$. Moreover, by imposing some
additional condition on the size of $h'$, one would also obtain that $U^*[P,U]$ has a
definite sign (the one of $m$). However, since this is not very satisfactory, we
shall explain in the next proposition how this additional condition can be avoided by
modifying adequately the conjugate operator $P$. The trick is based on the ergodic
theorem and the new conjugate operator is of the form
\begin{equation}\label{PV}
P_n := \frac1n\sum_{j=0}^{n-1} U^{-j}PU^j
\end{equation}
for some large $n\in\N^*$ (see the Appendix for abstract results on the
self-adjointness of $P_n$ and on regularity properties with respect to $P_n$).

\begin{Proposition}\label{prop_cocycles}
Let $f:=m\id+h$, with $m\in\Z^*$ and $h\in C^1\big([0,1);\R\big)$ with $h'$
Dini-continuous and $h(0)=h(1)$. Then the operator $U$, defined by \eqref{defdeV}
with $\theta$ irrational, has a purely Lebesgue spectrum equal to $\T$. Furthermore,
the operator $\langle P\rangle^{-s}$ is globally $U$-smooth for any $s>1/2$.
\end{Proposition}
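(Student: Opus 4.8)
\emph{Proof proposal.}
The plan is to apply Theorem \ref{Thm_spec} and Proposition \ref{Prop_U} with the open set $\Theta=\T$, vanishing compact term $K=0$, and a conjugate operator obtained from $P$ by the ergodic averaging \eqref{PV}. Since $\sigma(U)=\T$, the unitary $U$ has no spectral gap, so the hypothesis of Theorem \ref{Thm_spec} that must be verified is $U\in C^{1+0}(A)$; this is exactly where the Dini-continuity of $h'$ enters.

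First I would show that $U\in C^{1+0}(P)$. The group $\e^{itP}$ acts on $\H=\ltwo\big([0,1)\big)$ as the rotation $(\e^{itP}\varphi)(x)=\varphi([x+t])$, so conjugating the already-computed commutator $[P,U]=2\pi(m+h')\;\!U$ by $\e^{itP}$ merely translates the multiplier $m+h'$ by $t$ and replaces the cocycle $f$ by $f([\;\!\cdot-t])$. Because $\e^{2\pi im}=1$, this change of cocycle affects $U$ only through $h$, which is Lipschitz on the circle, and a direct estimate yields $\big\|\e^{-itP}[P,U]\e^{itP}-[P,U]\big\|\le C\big(|t|+\omega(t)\big)$, where $\omega$ denotes the modulus of continuity of $h'$ on the circle. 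Since $\int_0^1t^{-1}\omega(t)\,\d t<\infty$ is precisely the Dini condition, we get $\int_0^1t^{-1}\big\|\e^{-itP}[P,U]\e^{itP}-[P,U]\big\|\,\d t<\infty$, i.e. $U\in C^{1+0}(P)$. Next, $P_n-P$ is a bounded real multiplication operator (indeed $U^{-1}PU-P=2\pi\big(m+h'([\;\!\cdot-\theta])\big)$ and one telescopes), so $P_n$ is self-adjoint on $\dom(P)$; however this bounded perturbation need not lie in $C^1(P)$ because its multiplier involves the merely Dini-continuous function $h'$, so the passage from $U\in C^{1+0}(P)$ to $U\in C^{1+0}(P_n)$ is not a soft perturbation argument and is precisely the content of the abstract results of the Appendix, which exploit the special form of $P_n$.

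Now set $A:=\operatorname{sgn}(m)\,P_n$, with $n\in\N^*$ still to be chosen. Using the identity $[U^{-j}PU^j,U]=U^{-j}[P,U]U^j$ together with $U^*M_gU=M_{g([\;\!\cdot-\theta])}$ and $U^*[P,U]=2\pi\big(m+h'([\;\!\cdot-\theta])\big)$, one computes
$$
U^*[A,U]=2\pi|m|+\operatorname{sgn}(m)\,\frac{2\pi}{n}\sum_{j=1}^{n}h'([\;\!\cdot-j\theta]).
$$
Since the rotation by the irrational $\theta$ is uniquely ergodic and $h'$ is continuous on the circle, the Birkhoff averages $\frac1n\sum_{j=1}^{n}h'([\;\!\cdot-j\theta])$ converge uniformly, as $n\to\infty$, to $\int_0^1h'(y)\,\d y=h(1)-h(0)=0$. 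Choosing $n$ large enough that this average has sup-norm strictly less than $|m|$, we obtain $U^*[A,U]\ge a>0$ as a bounded self-adjoint operator, i.e. the Mourre estimate \eqref{avecK} holds on $\Theta=\T$ with $K=0$. Theorem \ref{Thm_spec} together with Remark \ref{emptypp} then give that $U$ has purely absolutely continuous spectrum in $\T$; combined with $\sigma(U)=\T$ this means that $U$ has purely absolutely continuous spectrum equal to $\T$, which is the announced purely Lebesgue spectrum.

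Finally, for the smoothness statement, note that $\sigma_{\rm p}(U)=\varnothing$, so Proposition \ref{Prop_U} applies on $\Theta\setminus\sigma_{\rm p}(U)=\T$ and shows that $\langle A\rangle^{-s}$ is locally $U$-smooth on $\T$, hence globally $U$-smooth, for every $s>1/2$. Since $P_n-P$ is bounded, $\langle P\rangle^s$ and $\langle A\rangle^s$ have the same domain with equivalent norms for $s\in[0,1]$ by a standard interpolation argument, so $\langle P\rangle^{-s}$ extends to an element of $\B\big(\dom(\langle A\rangle^s)^*,\H\big)$ and Proposition \ref{Prop_U} applies verbatim to $\langle P\rangle^{-s}$. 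The main obstacle in this scheme is the regularity step: deducing $U\in C^{1+0}(P_n)$ from $U\in C^{1+0}(P)$ (which forces one through the Appendix and uses the Dini hypothesis at full strength, the $C^{1,1}$ alternative of Theorem \ref{Thm_spec} being ruled out by the absence of a spectral gap), together with the uniform ergodic theorem that turns the sign-indefinite multiplier $m+h'$ into an operator of definite sign.
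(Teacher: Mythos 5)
Your proposal is correct and follows essentially the same route as the paper: establish $U\in C^{1}(P)$ and $U\in C^{1+0}(P)$ via the Dini-continuity of $h'$, pass to the ergodic average $P_n$ through the Appendix lemma, use unique ergodicity of the irrational rotation to get a strict Mourre estimate ($K=0$) for $n$ large, and conclude with Theorem \ref{Thm_spec}, Remark \ref{emptypp} and Proposition \ref{Prop_U}, transferring smoothness from $\langle P_n\rangle^{-s}$ to $\langle P\rangle^{-s}$ via $\dom(P_n)=\dom(P)$. The only cosmetic differences are your use of $A=\operatorname{sgn}(m)P_n$ instead of the paper's reduction to $m>0$ by replacing $P$ with $-P$, and your more explicit interpolation argument for the final domain identification.
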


\begin{proof}
In what follows, we assume without loss of generality that $m\in\N^*$ (if $m<0$, the
same proof works with $P$ replaced by $-P$).

Since $h'$ is Dini-continuous, $h'\in\linf\big([0,1)\big)$ and $U\in C^1(P)$ as
indicated before. Thus, one infers from the Appendix (with $A=P$) that for each
$n\in\N^*$ the operator $P_n$, defined in \eqref{PV}, is self-adjoint on
$\dom(P_n):=\dom(P)$. Moreover, one deduces from Lemma \ref{lemPn}(a) that $U$
belongs to $C^1(P_n)$ with
\begin{equation*}
[P_n,U]
=\frac1n\sum_{j=0}^{n-1}U^{-j}[P,U]U^j
=\frac{2\pi}n\sum_{j=0}^{n-1}U^{-j}\big(m +h'\big)U^{j+1}
=2\pi U\Bigg(m+\frac1n\sum_{j=1}^nh'([\,\cdot\;\!-j\theta])\Bigg).
\end{equation*}
Thus, one has
$U^*[P_n,U]=2\pi\big(m+\frac1n\sum_{j=1}^nh'([\,\cdot\;\!-j\theta])\big)$. Also,
since $\int_0^1\d x\,h'(x)=0$, it follows from the strict ergodicity of the
irrational translation by $\theta$ that
$\big|\frac1n\sum_{j=1}^nh'([\,\cdot\;\!-j\theta])\big|<1/2$ for $n$ large enough.
So, one deduces that $U^*\big[P_n,U\big]\geq\pi$ if $n$ is large enough.

To check that $U\in C^{1+0}(P_n)$, we observe from Lemma \ref{lemPn}(b) that it is
sufficient to show the inclusion $U \in C^{1+0}(P)$. We also recall that
$[P,U]=2\pi\big(m+h'\big)U$. Therefore, since $U$ satisfies the condition
$\int_0^1\frac{\d t}t\;\!\big\|\e^{-itP}U\e^{itP}-U\big\|<\infty$, one is reduced to
showing that
$$
\int_0^1\frac{\d t}{t}\;\!\big\|\e^{-itP}(m+h')\e^{itP}-(m+h')\big\|
=\int_0^1\frac{\d t}{t}\;\!\big\|\e^{-itP}h'\e^{itP}-h'\big\|<\infty.
$$
However, this condition is readily verified since it corresponds to nothing else but
the Dini-continuity of $h'$. One then concludes by applying Theorem \ref{Thm_spec}
and Proposition \ref{Prop_U}, and by taking into account the equality
$\dom(P_n)=\dom(P)$.
\end{proof}

The first part of Proposition \ref{prop_cocycles} is not new; the nature of the
spectrum of $U$ was already determined in \cite{ILM99,ILR93} under  a slightly weaker
assumption ($h$ absolutely continuous with $h'$ of bounded variation). On the other
hand, we have not been able to find in the literature any information about globally
$U$-smooth operators. Therefore, the second part Proposition \ref{prop_cocycles} is
apparently new.

%--------------------------------------------------------------------------------------
\subsection{Vector fields on manifolds}\label{secexample4}
%--------------------------------------------------------------------------------------

Let $M$ be a smooth orientable manifold of dimension $n\ge1$ with volume form
$\Omega$, and let $\H:=\ltwo(M,\Omega)$ be the corresponding $\ltwo$-space. Consider
a complete $C^\infty$ vector field $X\in\X(M)$ on $M$ with flow
$\R\times M\ni(t,p)\mapsto F_t(p)\in M$. Then, there exists for each $t\in\R$ a
unique function $\det_\Omega(F_t)\in C^\infty(M;\R)$ satisfying
$F_t^*\Omega=\det_\Omega(F_t)\Omega$, with $F_t^*$ the pullback by $F_t$
($\det_\Omega(F_t)$ is the determinant of $F_t$, see \cite[Def.~2.5.18]{AM78}).
Furthermore, the continuity of the map $t\mapsto F_t(\;\!\cdot\;\!)$, together with
the fact that $F_0=\id_M$ is orientation-preserving, implies that
$\det_\Omega(F_t)>0$ on $M$ for all $t\in\R$. So, one can define for each $t\in\R$
the operator
$$
U_t\;\!\varphi:=\big\{\det_\Omega(F_t)\big\}^{1/2}F_t^*\varphi,\qquad\varphi\in\H,
$$
which is easily shown to be unitary in $\H$ (use for example
\cite[Prop.~2.5.20]{AM78}). Now, let $M_0:=\big\{p\in M\mid X_p=0\big\}$ be the set
of critical points of $X$. Then, the subspaces $\ltwo(M_0,\Omega)$ and
$\ltwo(M\setminus M_0,\Omega)$ reduce $U_t$, with
$U_t\upharpoonright\ltwo(M_0,\Omega)=1$ and
$U_t\upharpoonright\ltwo(M\setminus M_0,\Omega)$ unitary. Therefore, we can restrict
our attention to the unitary operator
$U_t^0:=U_t\upharpoonright\ltwo(M\setminus M_0,\Omega)$ in
$\H_0:=\ltwo(M\setminus M_0,\Omega)$.

Suppose that there exist a function $g\in C^1(M\setminus M_0;\R)$ and a constant
$\delta>0$ such that
\begin{equation}\label{cond_g}
\d g\cdot X\ge\delta
\qquad\hbox{and}\qquad
\d g\cdot X\in\linf(M\setminus M_0,\Omega).
\end{equation}
Then, the multiplication operator
$$
A\;\!\varphi:=-g\;\!\varphi,\qquad\varphi\in C_{\rm c}(M\setminus M_0),
$$
is essentially self-adjoint in $\H_0$ \cite[Ex.~5.1.15]{Ped89}, and a direct
calculation using the inclusion
$U_t^0\;\!C_{\rm c}(M\setminus M_0)\subset C_{\rm c}(M\setminus M_0)$ implies for
each $\varphi\in C_{\rm c}(M\setminus M_0)$ that
\begin{align*}
\big\langle A\;\!\varphi,U_t^0\varphi\big\rangle_{\H_0}
-\big\langle\varphi,U_t^0A\;\!\varphi\big\rangle_{\H_0}
=\big\langle\varphi,U_t^0\big(U_{-t}AU_t^0-A\big)\varphi\big\rangle_{\H_0}
&=-\big\langle\varphi,U_t^0\big(F_{-t}^*\;\!g-g\big)\varphi\big\rangle_{\H_0}\\
&=\left\langle\varphi,
U_t^0\int_0^t\d s\,(\d g)_{F_{-s}}\cdot X_{F_{-s}}\;\!\varphi\right\rangle_{\H_0},
\end{align*}
where $\int_0^t\d s\,(\d g)_{F_{-s}}\cdot X_{F_{-s}}$ is the multiplication operator
by the function $p\mapsto\int_0^t\d s\,(\d g)_{F_{-s}(p)}\cdot X_{F_{-s}(p)}$. This,
together with the density of $C_{\rm c}(M\setminus M_0)$ in $\dom(A)$ and the second
condition of \eqref{cond_g}, implies that
$[A,U_t^0]=U_t^0\int_0^t\d s\,(\d g)_{F_{-s}}\cdot X_{F_{-s}}\in\B(\H_0)$ and
$
\big[A,[A,U_t^0]\big]
=\big(\int_0^t\d s\,(\d g)_{F_{-s}}\cdot X_{F_{-s}}\big)^2\in\B(\H_0)
$.
So, $U_t^0\in C^2(A)\subset C^{1+0}(A)$, and the first condition of \eqref{cond_g}
gives for any $t>0$
$$
\big(U_t^0\big)^*\big[A,U_t^0\big]
=\int_0^t\d s\,(\d g)_{F_{-s}}\cdot X_{F_{-s}}
\ge\delta t.
$$
Therefore, we infer from Theorem \ref{Thm_spec} and Remark \ref{emptypp} that $U_t^0$
is purely absolutely continuous in $\H_0$ for any $t>0$ (and thus for all $t\neq0$
since $U_{-t}^0=(U_t^0)^*$). In particular, $U_t$ is purely absolutely continuous in
$\H$ for any $t\neq0$ if the measure of $M_0$ (relative to $\Omega$) is zero. This
result complements \cite[Sec.~2.9(ii)]{Put67}, where the author treats the case of
unitary operators induced by divergence-free vector fields on connected open subsets
of $\R^n$.

To conclude this subsection, we exhibit an explicit class of vector fields on $\R^n$
satisfying all of our assumptions\;\!: Take $M=\R^n$ and let $X\in\X(\R^n)$ be given
by
$$
X_x:=f(|x|)\;\!\frac x{|x|}\;\!,\quad x\in\R^n,
$$
where $f\in C^\infty\big([0,\infty);\R\big)$ satisfies
$\lim_{r\searrow0}|f(r)/r|<\infty$ and $\delta_1r\le f(r)\le\delta_2r+\delta_3$ for
all $r\ge0$ and some $\delta_1,\delta_2,\delta_3>0$. Then, we have $M_0=\{0\}$, and
we know from \cite[Rem.~1.8.7]{Nar73} that $X$ is complete. Therefore, the
restriction $U_t^0:=U_t\upharpoonright\ltwo(\R^n\setminus\{0\},\d x)$ is a well
defined unitary operator. Now, let $g\in C^1(\R^n\setminus\{0\};\R)$ be given by
$g(x):=\ln(|x|^2)$. Then, the conditions \eqref{cond_g} are verified due to the
properties of the function $f$, and so $U_t^0$ is purely absolutely continuous in
$\ltwo(\R^n\setminus\{0\},\d x)$ for any $t\neq0$. Since $M_0$ has Lebesgue measure
zero, it follows that $U_t$ is purely absolutely continuous in $\ltwo(\R^n,\d x)$.
This covers for instance the well known case of the dilation group, where $f(r)=r$.

%--------------------------------------------------------------------------------------
\section{Appendix}
%--------------------------------------------------------------------------------------

In this appendix, we introduce an abstract class of conjugate operators which is
useful for the study of cocycles in Section \ref{seccoc}.

Let $A$ and $U$ be respectively a self-adjoint and a unitary operator in a Hilbert
space $\H$. If $U\in C^1(A)$, then we know from \cite[Prop.~5.1.5-5.1.6]{ABG} that
$U^k\in C^1(A)$ for each $k\in\Z$ and $U^k\dom(A)=\dom(A)$. Therefore, for each
$n\in\N^*$ the operator $\frac1n\sum_{j=0}^{n-1}U^{-j}\big[A,U^j\big]$ is bounded,
and the operator
$$
A_n\varphi
:=\frac1n\sum_{j=0}^{n-1}U^{-j}AU^j\varphi
\equiv\frac1n\sum_{j=0}^{n-1}U^{-j}\big[A,U^j\big]\varphi+A\varphi,
\qquad\varphi\in\dom(A_n):=\dom(A),
$$
is self-adjoint. Furthermore, we have the following result on the regularity of $U$
with respect to $A_n$:

\begin{Lemma}\label{lemPn}
Take $n\in\N^*$ and let $A$ and $U$ be respectively a self-adjoint and a unitary
operator in a Hilbert space $\H$. Then,
\begin{enumerate}
\item[(a)] if $\;\!U\in C^1(A)$, then $U\in C^1(A_n)$ with
$[A_n,U]=\frac1n\sum_{j=0}^{n-1}U^{-j}[A,U]U^j$,
\item[(b)] if $\;\!U\in C^{1+0}(A)$, then $U\in C^{1+0}(A_n)$.
\end{enumerate}
\end{Lemma}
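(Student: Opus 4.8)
The plan is to establish (a) by a direct computation of the commutator form, and then to deduce (b) by comparing the unitary groups generated by $A$ and by $A_n$. For (a): since $U\in C^1(A)$, \cite[Prop.~5.1.5-5.1.6]{ABG} gives $U^j\dom(A)=\dom(A)$ for all $j\in\Z$, so for $\varphi\in\dom(A_n)=\dom(A)$ the vectors $\psi_j:=U^j\varphi$ all lie in $\dom(A)$. First I would expand, using the definition of $A_n$, the identity $U^*=U^{-1}$ and the self-adjointness of $A$,
$$
\big\langle A_n\varphi,U\varphi\big\rangle-\big\langle\varphi,UA_n\varphi\big\rangle
=\frac1n\sum_{j=0}^{n-1}\big(\langle A\psi_j,\psi_{j+1}\rangle-\langle A\psi_{j-1},\psi_j\rangle\big).
$$
The same ingredients show that $\langle\psi_j,[A,U]\psi_j\rangle=\langle A\psi_j,\psi_{j+1}\rangle-\langle A\psi_{j-1},\psi_j\rangle$ for each $j$, so the right-hand side equals $\frac1n\sum_{j=0}^{n-1}\langle\varphi,U^{-j}[A,U]U^j\varphi\rangle$. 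Since $\frac1n\sum_{j=0}^{n-1}U^{-j}[A,U]U^j\in\B(\H)$, this form is continuous on $\dom(A_n)$, hence $U\in C^1(A_n)$ with the announced commutator. This step involves no difficulty.

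For (b), set $B_n:=A_n-A\in\B(\H)$ and $W(t):=\e^{-itA}\e^{itA_n}$, which is unitary. The point is that $W(t)$ is close to $1$ for small $|t|$. Differentiating $t\mapsto\e^{-itA}\e^{itA_n}\varphi$ on the core $\dom(A)$ (here $\e^{itA_n}\varphi\in\dom(A_n)=\dom(A)$) yields $\frac{\d}{\d t}\,\e^{-itA}\e^{itA_n}\varphi=\e^{-itA}(iB_n)\e^{itA_n}\varphi$, so that
$$
W(t)-1=\int_0^t\d s\;\e^{-isA}(iB_n)\e^{isA}\,W(s),
$$
first on $\dom(A)$ and then, by density, as an identity in $\B(\H)$. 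Since each $W(s)$ is unitary, this gives $\|W(t)-1\|\le|t|\,\|B_n\|$, whence $\int_0^1\frac{\d t}{t}\,\|W(t)-1\|<\infty$, and likewise for $W(t)^*$.

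To conclude (b), write $C_n:=[A_n,U]=\frac1n\sum_{j=0}^{n-1}U^{-j}[A,U]U^j$ (from (a)), which is bounded; using $\e^{itA_n}=\e^{itA}W(t)$, hence $\e^{-itA_n}=W(t)^*\e^{-itA}$, I would decompose
$$
\e^{-itA_n}C_n\e^{itA_n}-C_n
=W(t)^*\big(\e^{-itA}C_n\e^{itA}-C_n\big)W(t)
+W(t)^*C_n\big(W(t)-1\big)+\big(W(t)^*-1\big)C_n .
$$
The norms of the last two terms are at most $2\,\|C_n\|\,\|W(t)-1\|$, which is integrable against $\d t/t$ by the previous paragraph. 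For the first term I would expand $C_n$ and apply the Leibniz identity for the homomorphism $S\mapsto\e^{-itA}S\e^{itA}$ to each product $U^{-j}[A,U]U^j$: the contributions carrying a factor $\e^{-itA}U^{\pm j}\e^{itA}-U^{\pm j}$ are bounded by $C|t|$ (using $U^{\pm j}\in C^1(A)$ and the standard estimate $\|\e^{-itA}S\e^{itA}-S\|\le|t|\,\|[A,S]\|$ valid for $S\in C^1(A)$), while the remaining contribution carries the factor $\e^{-itA}[A,U]\e^{itA}-[A,U]$, whose integral against $\d t/t$ is finite precisely because $U\in C^{1+0}(A)$. Summing over the finitely many $j$ and combining the three terms gives $\int_0^1\frac{\d t}{t}\,\|\e^{-itA_n}C_n\e^{itA_n}-C_n\|<\infty$, i.e. $U\in C^{1+0}(A_n)$. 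I expect the only genuinely non-bookkeeping step to be the group comparison, that is, the estimate $\|W(t)-1\|\le|t|\,\|B_n\|$; once this is in hand, the remainder is just the Leibniz rule for norm-differences together with the $C^1(A)$ and $C^{1+0}(A)$ hypotheses.
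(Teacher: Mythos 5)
Your proposal is correct and follows essentially the same route as the paper: part (a) is the same direct computation of the commutator form on $\dom(A_n)=\dom(A)$, and in part (b) your Duhamel estimate $\|W(t)-1\|\le|t|\,\|A_n-A\|$ for $W(t)=\e^{-itA}\e^{itA_n}$ is just a repackaging of the paper's bound $\e^{itA_n}=\e^{itA}+C_t$ with $\|C_t\|\le\mathrm{Const}\,|t|$, after which both arguments reduce the Dini condition for $A_n$ to the one for $A$ applied to $U^{-j}[A,U]U^j$ via the Leibniz rule, the $C^1(A)$ bound for $U^{\pm j}$ and the $C^{1+0}(A)$ hypothesis for $[A,U]$. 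You merely write out explicitly the Leibniz step that the paper leaves implicit.
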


\begin{proof}
(a) Since $U^k\in C^1(A)$ for each $k\in\Z$, one has for any $\varphi\in\dom(A_n)$ that
$$
\big\langle A_n\varphi,U\varphi\big\rangle-\big\langle\varphi,UA_n\varphi\big\rangle
=\frac1n\sum_{j=0}^{n-1}
\big\langle\varphi,\big(U^{-j}AU^{j+1}-U^{1-j}AU^j\big)\varphi\big\rangle
=\frac1n\sum_{j=0}^{n-1}\big\langle\varphi,U^{-j}[A,U]U^j\varphi\big\rangle,
$$
with $[A,U]\in\B(\H)$. This implies the claim.

(b) We know from point (a) that $U\in C^1(A_n)$ with
$[A_n,U]=\frac1n\sum_{j=0}^{n-1}U^{-j}[A,U]U^j$. Therefore, it is sufficient to show
for each $j\in\{1,\ldots,n-1\}$ that
$$
\int_0^1\frac{\d t}{t}\;\!\big\|\e^{-itA_n}B_j\e^{itA_n}-B_j\big\|<\infty,
$$
with $B_j:=U^{-j}[A,U]U^j$. But, for each $t\in\R$ and each $\varphi\in\dom(A)$ we
have
$$
\e^{itA_n}\varphi-\e^{itA}\varphi
=\int_0^t\d s\,\frac\d{\d s}\big(\e^{isA_n}\e^{-isA}-1\big)\e^{itA}\varphi
=\frac in\sum_{k=0}^{n-1}
\int_0^t\d s\,\e^{isA_n}U^{-k}\big[A,U^k\big]\e^{i(t-s)A}\varphi.
$$
So, there exists $C_t\in\B(\H)$ with $\|C_t\|\le{\rm Const.}|t|$ such that
$\e^{itA_n}=\e^{itA}+C_t$, and thus
\begin{equation}\label{2_integrales}
\int_0^1\frac{\d t}{t}\;\!\big\|\e^{-itA_n}B_j\e^{itA_n}-B_j\big\|
\le{\rm Const}+\int_0^1\frac{\d t}{t}\;\!\big\|\e^{-itA}B_j\e^{itA}-B_j\big\|.
\end{equation}
Now, the integral $\int_0^1\frac{\d t}{t}\;\!\big\|\e^{-itA}D\e^{itA}-D\big\|$ is
finite for $D=U^{-j}$, $D=[A,U]$ and $D=U^j$ due to the assumption. So, the integral
in the r.h.s. of \eqref{2_integrales} is also finite, and thus the claim is proved.
\end{proof}

%--------------------------------------------------------------------------------------
\section*{Acknowledgements}
%--------------------------------------------------------------------------------------

C.F. and R.T.d.A. thank Olivier Bourget for many useful discussions on the spectral
theory for unitary operators. S.R. is grateful for the hospitality provided by the
Mathematics Department of the Pontificia Universidad Cat\'olica de Chile in November
2011.

%--------------------------------------------------------------------------------------
%\bibliography{../bibliographie/bibliographie}
%--------------------------------------------------------------------------------------

\def\cprime{$'$} \def\cprime{$'$}

%--------------------------------------------------------------------------------------

\end{document}